\newtheorem{theorem}{Theorem}
\author{   Hila Naaman$^{*}$, {\it Student Member, IEEE}, Satish~Mulleti$^{*}$, {\it Member, IEEE}, Yonina C. Eldar, {\it Fellow, IEEE}% <-this % stops a space
\thanks{\scriptsize $^{*}$Authors contributed equally}
	\thanks{\scriptsize All the authors are with the Faculty of Math and Computer Science, Weizmann Institute of Science, Israel. Email: hila.naaman@weizmann.ac.il, satish.mulleti@gmail.com, yonina.eldar@weizmann.ac.il}
	\thanks{\scriptsize This project has received funding from the Benoziyo Endowment Fund for the Advancement of Science, Estate of Olga Klein -– Astrachanthe; Israeli Council for Higher Education (CHE) via the Weizmann Data Science Research Center and by a research grant from Madame Olga Klein – Astrachan; European Union’s Horizon 2020 research and innovation program under grant No. 646804-ERC-COG-BNYQ; and the Israel Science Foundation under grant no. 0100101. }

}
\begin{document}

\title{FRI-TEM: Time Encoding Sampling of Finite-Rate-of-Innovation Signals}
	\markboth{Submitted to the IEEE Transactions of Signal Processing}%
	{Shell \MakeLowercase{\textit{et al.}}: Bare Demo of IEEEtran.cls for Journals}
\maketitle

\begin{abstract}
Classical sampling is based on acquiring signal amplitudes at specific points in time, with the minimal sampling rate dictated by the degrees of freedom in the signal. The samplers in this framework are controlled by a global clock that operates at a rate greater than or equal to the minimal sampling rate. At high sampling rates, clocks are power-consuming and prone to electromagnetic interference.
An integrate-and-fire time encoding machine (IF-TEM) is an alternative power-efficient sampling mechanism which does not require a global clock. Here, the samples are irregularly spaced threshold-based samples. In this paper, we investigate the problem of sampling finite-rate-of-innovation (FRI) signals using an IF-TEM. We provide theoretical recovery guarantees for an FRI signal with arbitrary pulse shape and without any constraint on the minimum separation between the pulses. In particular, we show how to design a sampling kernel, IF-TEM, and recovery method such that the FRI signals are perfectly reconstructed. We then propose a modification to the sampling kernel to improve noise robustness. Our results enable designing low-cost and energy-efficient analog-to-digital converters for FRI signals.
\end{abstract}

%
% Note that keywords are not normally used for peerreview papers.
\begin{IEEEkeywords}
Time-encoding machine (TEM), finite-rate-of-innovation (FRI) signals, time-based sampling, integrate and fire TEM (IF-TEM), sub-Nyquist sampling, analog-to-digital conversion, non-uniform
sampling.
\end{IEEEkeywords}
\IEEEpeerreviewmaketitle

\section{Introduction}
% Sampling is a fundamental research topics in communication and signal processing. 
Sampling is a process which enables discrete representation of continuous-time signals allowing for efficient processing of analog signals using digital signal processors \cite{eldar2015sampling,unser2000sampling}.
A commonly used discrete representation is to measure uniform, instantaneous samples of an analog signal, as in Shannon-Nyquist sampling theory \cite{nyquist1928certain}, and more general shift-invariant sampling \cite{eldar2009compressed,christensen2004oblique}. The amplitude samples are measured via a sample and hold circuit that is controlled by a global clock that operates at a speed greater or equal to the minimum sampling rate. At high sampling rates, clocks are power consuming and subject to electromagnetic interference \cite{asynchronous_adc}.   

% It states that bandlimited signals are uniquely represented by their uniform, instantaneous samples measured at Nyquist rate, which is equal to twice the maximum frequency component of the signal. {\color{magenta}{Need to rewrite. Most natural signals vary slowly for long period of time, and high-frequency transients are often localized and sparse \cite{asynchronous_adc}. For such signals, using a conventional sampler that operates at the highest possible rate results in complex circuitry and high power consumption. A signal-dependent, asynchronous sampling mechanism is well suited in this case.}}

% The classic method for the sampling and recovery of a signal is based on acquiring the signal values in discrete equally spaced intervals over time, at a rate greater or equal to that dictated by the Nyquist-Shannon sampling theorem \cite{nyquist1928certain,eldar2000filterbank}.
Time encoding machines (TEM) provide an alternative digital representation of analog signals \cite{lazar2004perfect,adam2020sampling,rastogi2009low,alexandru2019reconstructing,hilton2021time,rudresh2020time,adam2020encoding,martinez2019delta}, which is asynchronous, that is, no global clock is required unlike conventional analog to digital converters. This leads to lower power consumption and reduced electromagnetic interference \cite{asynchronous_adc}. In this sampling scheme, an analog signal is represented by a set of time instants at which the input signal or its function crosses a certain threshold. The number of time instants per unit time or \emph{firing rate} is proportional to the local frequency content of the input signal. For example, there are more firings in a region where the signal amplitude varies rapidly compared to regions with relatively slow variations.

A popular approach for time encoding is an integrate and fire time encoding machine (IF-TEM), which is a brain-inspired sampling paradigm. It leads to simple and energy-efficient devices, such as analog-to-digital converters \cite{rastogi2009low,hilton2021time}, neuromorphic computers \cite{neuromorphic_computer}, event-based vision sensors \cite{camera1, camera2}, and more. In an IF-TEM, a bias is added to the analog input signal to make the signal positive. The resulting signal is then scaled and integrated, and the integral value is compared to a threshold. Each time the threshold is reached, time points or firing instants are recorded, which encode the information of the analog signal \cite{lazar2004perfect,lazar2004time}. A natural question is whether the analog signal can be perfectly reconstructed from the time-encodings.  

In recent years time-based sampling theory has witnessed growing interest with several authors proving the capabilities of TEM to sample and reconstruct bandlimited signals \cite{lazar2004time,lazar2003time,feichtinger2012approximate,adam2020sampling,adam2019multi,adam2019multi,adam2020encoding}. Interestingly, the minimum firing rate of a TEM for perfect recovery of a bandlimited signal is equal to the Nyquist rate of the signal. Since the firing rate has to increase with bandwidth, we look beyond the bandlimited structure of the signal so that sampling can be performed at a sub-Nyquist rate.

	\begin{figure}[t!]
		\centering
		\includegraphics[width= 2.5	in]{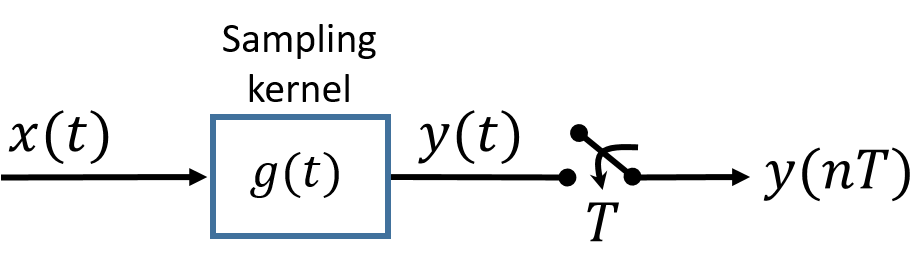}
		\caption{A kernel-based FRI sampling framework: An FRI signal $x(t)$ is first filtered by a sampling kernel $g(t)$ and then instantaneous uniform samples are measured at a sub-Nyquist rate. Parameters of the FRI signal are estimated from the sub-Nyquist samples.}
		\label{fig:FRIgeneric}
	\end{figure}

In the sub-Nyquist regime, finite-rate-of-innovation (FRI) signals are widely studied \cite{vetterli2002sampling, tur2011innovation, eldar2015sampling}. These signals have fewer degrees of freedom than the signal’s Nyquist rate samples, which enables sub-Nyquist sampling \cite{urigiien2012sampling}. For instance, consider an FRI signal consisting of a sum of $L$ amplitude-scaled and time-delayed copies of a known pulse. Since the pulse is known, the signal is completely specified by $L$ amplitudes and $L$ time-delays, which amounts to $2L$ degrees of freedom. It has been shown that $2L$ measurements of the signal uniquely determine the amplitude and the time-delays. This is equivalent to saying that these signals have a finite rate of innovation which equals $2L$.
A typical FRI sampling scheme is shown in Fig.~\ref{fig:FRIgeneric} where the signal is first filtered by a sampling kernel to remove redundancy in the signal, and then instantaneous samples are measured at a sub-Nyquist rate \cite{tur2011innovation}. Given the advantage of TEM over conventional sampling, we are interested in studying the applicability of TEM-based sampling to FRI signals. 

\begin{figure}
    \centering
    \includegraphics[width= 3 in]{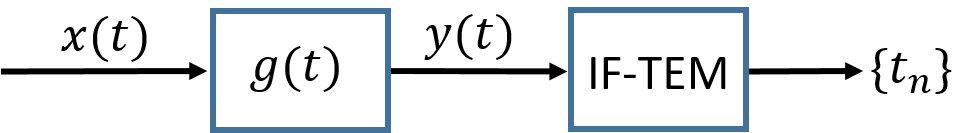}
    \caption{Sampling setup: A continuous-time signal $x(t)$ is filtered through a sampling kernel $g(t)$ and then sampled by a TEM which generates time-instants $\{t_n\}$.}
    \label{fig:TEMfri}
\end{figure}

We consider the TEM-based sampling scheme for FRI signals shown in Fig.~\ref{fig:TEMfri}, where the signal is modeled as a sum of shifted and scaled pulses with a known pulse shape.
Our goal is to develop conditions on the sampling kernel and IF-TEM parameters so that perfect recovery is guaranteed.

Alexandru and Dragotti \cite{alexandru2019reconstructing} 
% consider IF-TEM-based sampling of stream of Dirac impulses, piecewise constant signals, and stream of known pulses. They 
consider a sequential reconstruction method for certain FRI signals. They show that by using either a compactly supported polynomial generating kernel or an exponential generating kernel, the time delays and amplitudes of each pulse can be perfectly recovered from four firing instants. Thus, to reconstruct a stream of $L$ pulses with $2L$ degrees of freedom, $4L$ firing instants are needed. The sequential nature of the reconstruction imposes a restriction on the minimum separation between any two consecutive pulses, such that any two successive pulses must be separated by the support of the sampling kernel. In addition, the threshold of the IF-TEM must be small enough to achieve a sufficient firing rate. To address these issues, Hilton et al. \cite{hilton2021time} consider IF-TEM sampling by using the derivative of a hyperbolic secant as a sampling kernel. They showed that a stream of $L$ Dirac impulses or piecewise constant functions with $L$ discontinuities are perfectly reconstructed from $3L+1$ firing instants without any minimum separation conditions. However, the sampling mechanism can not be extended to FRI signals with arbitrary known pulse shapes.  

Rudresh et al. \cite{rudresh2020time} show that by using sampling kernels that have a frequency-domain alias cancellation properties (see \cite{tur2011innovation} for details), FRI signals with $2L$ degrees of freedom can be recovered from $2L+2$ IF-TEM firing instants. The reconstruction algorithm does not require any minimum separation conditions but assumes that certain invertibility conditions are guaranteed. Through simulations, the authors show that the invertibility conditions are satisfied for a large number of experiments; however, theoretical guarantees are not given. In addition, their sampling results do not deal with the noisy scenario.     

Our contribution is twofold. First, we derive theoretical guarantees for perfect reconstruction of FRI signals with arbitrary but known pulse shapes. Second, we design the sampling kernel and IF-TEM sampler with improved noise robustness compared to existing approaches. 
% In both methods, we show that the Fourier coefficients can be obtained from the time instants by properly choosing the IF-TEM parameters.
Since FRI signals with $2L$ degrees of freedom can be perfectly reconstructed from $2L$ consecutive Fourier series coefficients (FSCs) \cite{eldar2015sampling, tur2011innovation}, we consider a frequency-domain FRI signal reconstruction approach. In particular, we choose sampling kernels with the alias-cancellation condition to annihilate the undesirable FSCs \cite{tur2011innovation}. The filtered signal, with fewer FSCs, is applied to an IF-TEM, and firing instants are measured. The measurements derived from the firing instants are linearly related to the FSCs of the filtered FRI signal \cite{rudresh2020time}. We show that $2L+2$ firing instants are sufficient for the relation to be invertible; that is, $2L$ FSCs are uniquely determined from a minimum of $2L+2$ IF-TEM measurements. Furthermore, we establish conditions on the IF-TEM parameters that ensure that the minimum firing rate is achieved. To summarize, we show that by using a sampling kernel with frequency-domain alias-cancellation properties and an IF-TEM sampler with a minimum firing rate of $2L+2$ per time unit, an FRI signal with $2L$ degrees of freedom is uniquely recovered.

While our first sampling approach leads to perfect reconstruction of the signal in the absence of noise, the reconstruction method can be highly sensitive to noise. To address this issue, we propose a modified sampling and reconstruction mechanism. 
% For a periodic FRI signal, the output of the sampling kernel is a bandlimited periodic signal. 
In particular, we show that the zeroth Fourier coefficient of the filtered signal results in unstable inverse while computing the FSCs from the time instants in the presence of noise. To improve noise robustness while estimating FSCs from IF-TEM measurements, we modify the sampling kernel by removing the zero-frequency component.
% {\color{magenta}{Wording. Fourier coefficient of the filtered signal results in a linearly related component to the differences of the time instants. This term is results in unstable inverse while computing the FSCs from the time instants in the presence of noise. To make a stable inverse, we modified the sampling kernel by removing the zero-frequency component. }}
For this modified method, we show that $4L+2$ time instants are sufficient for perfect recovery when the time-delays of the FRI signal are off-grid, whereas $2L+2$ firings are sufficient when the delays are on grid. This latter approach requires twice the number of firings for perfect recovery with off-grid time delays, but is more robust to noise. Through simulations, we show that for the same number of firing rates (beyond $2L+2$ firings), the mean squared error in the estimation of the time delays in the second approach is 2-6 dB lower compared to the first.

This paper is organized as follows. In Section II, we first review IF-TEM (Section \ref{subsec:TEM}), followed by a problem formulation in Section \ref{subsec:problem}. In Section III, we present our first results where the sampling kernel includes a zero-frequency component. A noise-robust sampling and reconstruction method together with simulations are presented in Section IV. While we consider periodic FRI signal model in the previous sections, in Section \ref{sec:Nonperiodic}, we discuss recovery guarantees for non-periodic FRI signal. Our concluding remarks presented in Section \ref{sec:Conclusions}.

	\begin{figure}[t!]
	\centering
	\includegraphics[width = 3 in]{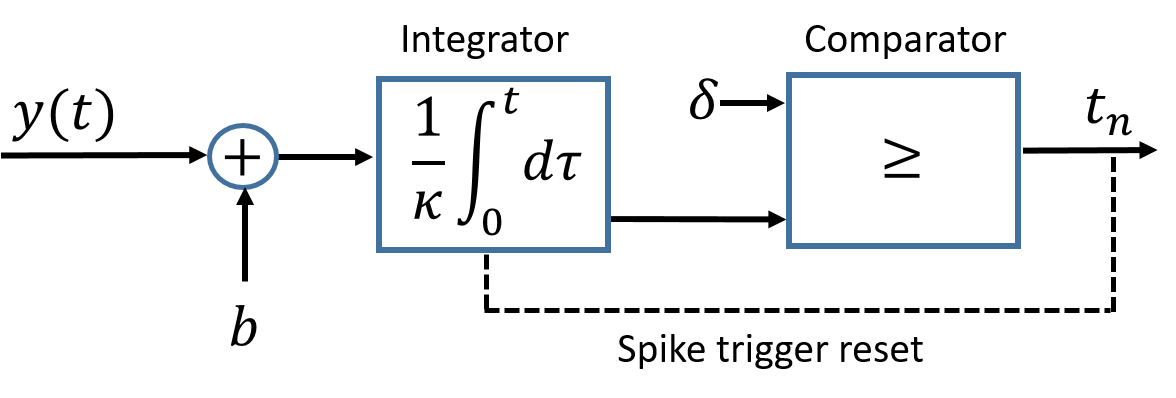}
	\caption{Time encoding machine with spike trigger reset. The input is biased by $b$, scaled by $\kappa$, and integrated. A time instant is recorded when the threshold $\delta$ is reached, after which the value of the integrator resets.}
	\label{fig:Vetterli_TEM}
\end{figure}
\section{Problem Formulation and Preliminaries}
We begin by presenting some known results on IF-TEM followed by the problem formulation.

\subsection{Time Encoding Machine}
\label{subsec:TEM}
We consider an IF-TEM whose operating principle is the same as in \cite{rudresh2020time} (see Fig. \ref{fig:Vetterli_TEM}). The input to the IF-TEM is a bounded signal $y(t)$, and the output is a series of firing or time instants. An IF-TEM is parametrized by positive real numbers $b$, $\kappa$, and $\delta$. This mechanism work as follows: A bias $b$ is added to a $c$-bounded signal $y(t)$ where $|y(t)|\leq c<b<\infty$, and the sum is integrated and scaled by $\kappa$. When the resulting signal reaches the threshold $\delta$, the time instant $t_n$ is recorded, and the integrator is reset. The process is repeated to record subsequent time instants, i.e., if a time instant $t_n$ was recorded, the next time instant $t_{n+1}$ is measured such that 
\begin{align}
   \frac{1}{\kappa}\int_{t_n}^{t_{n+1}} (y(s)+b)\, ds = \delta.
   \label{eq:y_t_relation_st}
\end{align}
% Thus, the input signal $y(t)$ and the IF-TEM output, which is the set of time instants $\{ t_n, n\in\mathbb{Z} \}$ are related by \eqref{eq:y_t_relation_st}.
The time encodings $\{ t_n, n\in\mathbb{Z} \}$ form a discrete representation of the analog signal $y(t)$ and the objective is to reconstruct $y(t)$ from them. Typically, the reconstruction is performed by using an alternative set of discrete representations $\{y_n, n \in \mathbb{Z}\}$ defined as
\begin{equation}
y_n \triangleq \int_{t_n}^{t_{n+1}}y(s)\, ds =  -b(t_{n+1}-t_n)+\kappa\delta.
\label{eq:trigger0}
\end{equation}
The measurements $\{y_n, n \in \mathbb{Z}\}$ are derived from the time encodings $\{ t_n, n\in\mathbb{Z} \}$ and IF-TEM parameters $\{b, \kappa, \delta\}$ \cite{lazar2004time, rudresh2020time}. 

Although  reconstruction methods vary for different classes of signals, for perfect recovery of any signal, the firing rate is required to satisfy a lower bound that depends on the degrees of freedom of the signal. The firing rate of an IF-TEM is bounded both from above and below, where the bounds are a function of the IF-TEM parameters and an upper-bound on the signal amplitude.  
Using \eqref{eq:trigger0} and the fact that $|y(t)|\leq c$, it can be shown that for any two consecutive time instants \cite{lazar2003time,lazar2004time}:
\begin{equation}
 \frac{\kappa\delta}{b+c} \leq t_{n+1}-t_n \leq \frac{\kappa\delta}{b-c}.
   \label{eq:consecutive_time}
\end{equation}
The inequalities in \eqref{eq:consecutive_time} imply that in any arbitrary, non-zero, observation interval $T_{\text{obs}}$, the maximum and minimum number of firings are
% {\color{red}{I am not sure I agree with equation 4 since equation 3 talks about the difference between  spikes.}}
\begin{align}
    T_{\text{obs}} \frac{b+c}{\kappa \delta} \quad \text{and} \quad T_{\text{obs}} \frac{b-c}{\kappa \delta},
\end{align}
respectively. This implies that the firing rate of an IF-TEM $F_R$ with parameters $b, \kappa$, and $\delta$ are upper and lower bounded as 
\begin{align}
    \frac{b-c}{\kappa \delta} \leq F_R \leq  \frac{b+c}{\kappa \delta}.
    \label{eq:firing_bounds}
\end{align}
Our goal is to recover a continuous-time signal $x(t)$ of the form of \eqref{eq:fri} below, from the time instances $\{t_n$\}.

\subsection{Problem Formulation}
\label{subsec:problem}
Like in previous work \cite{rudresh2020time}, we consider a $T$-periodic FRI signal of the form
\begin{equation}
    x(t) = \sum_{p\in\mathbb{Z}}\sum_{\ell=1}^L a_{\ell} h(t-\tau_{\ell}-p T),
    \label{eq:fri}
\end{equation}
where $h(t)$ is a known, real-valued pulse and the amplitudes and delays $\{(a_{\ell},\tau_{\ell})|\tau_{\ell} \in [0, T),a_{\ell}\in(0, a_{\max}] \}_{\ell=1}^L$ are unknown parameters. This signal model is ubiquitous in applications such as radar \cite{bar2014sub,bajwa2011identification,rudresh2017finite}, ultrasound \cite{tur2011innovation,mulleti2014ultrasound}, and more. In these applications, $h(t)$ denotes a known transmit pulse which is reflected from $L$ targets. The reflected signal is modeled as $x(t)$ where $a_\ell$ and $\tau_\ell$ denote the amplitude and time-delay corresponding to the $\ell$-th target.

In general, FRI signals can have wide bandwidth due to short duration pulses $h(t)$. However, by using the structure of the signal and knowledge of the pulse $h(t)$, FRI signals can be sampled at sub-Nyquist rates. This is typically achieved by passing $x(t)$ through a designed sampling kernel $g(t)$ and then measuring low-rate samples of the filtered signal $y(t)$ as shown in Fig. \ref{fig:FRIgeneric}. The kernel is designed such that the FRI parameters $\{a_\ell, \tau_\ell\}_{\ell=1}^L$ are computed accurately from the samples. In particular, it has been shown that $2L$ samples of $y(t)$ in an interval of length $T$, that are measured either uniformly \cite{tur2011innovation,vetterli2002sampling} or non-uniformly \cite{mulleti2016fri,donoho2006compressed}, are sufficient to determine $\{a_\ell, \tau_\ell\}_{\ell=1}^L$ uniquely. The reconstruction or determination of the parameters from the samples is achieved by applying spectral analysis methods such as the annihilating filter \cite{eldar2015sampling,vetterli2002sampling}.

As discussed in the introduction, a conventional FRI sampling scheme, such as in Fig.~\ref{fig:FRIgeneric} has a sampler which is controlled by a global clock that operates at the rate of innovation $\frac{2L}{T}$ Hz. For a large $L$ or a small $T$, the sampling rate increases, and the global clock requires high power. In this case, an IF-TEM sampler is well suited as it does not require a global clock.  

% However, given the sparse nature of the FRI signal, the signal and its filtered version may not have significant variation over a large time interval. {\color{magenta}{in terms of what? sampling is the same, so in what sense is it well suited? An asynchronous IF-TEM based sampling is well suited for sampling such signals compared to conventional sampling. }}

We consider the problem of perfect recovery of the FRI parameters $\{a_\ell, \tau_\ell\}_{\ell=1}^L$ by using an IF-TEM sampling scheme as shown in Fig.~\ref{fig:TEMfri}. Specifically, we consider designing the sampling kernel $g(t)$ and an IF-TEM such that the FRI parameters are uniquely determined from the time-encodings by keeping the firing rate close to the rate of innovation.

In the next section, we follow a similar strategy to the one in \cite{rudresh2020time}, and show that in the noiseless case, perfect recovery is guaranteed using as few as $2L+2$ firings in an interval of length $T$. 
In Section \ref{subsec:mthd2}, we suggest an alternative approach that is more robust to noise.
% We are mainly interested in pulses with small time support. In this case, direct uniform sampling $2L$ of samples of the signal will result in many zero samples, and the probability for the sample to hit a pulse is very low. Thus, more sophisticated sampling scheme must be used.

%%%%%%%%%%%%%%%%%%%%%%%%%%%%%%%%%%%%%%%%%%%%%%%%%%%%%%%%%%%%%%%%%%%%%%%%%%%%%%%%
\section{FRI-TEM: sampling and perfect recovery of FRI Signals from IF-TEM measurements}
%%%%%%%%%%%%%%%%%%%%%%%%%%%%%%%%%%%%%%%%%%%%%%%%%%%%%%%%%%%%%%%%%%%%%%%%%%%%%%%%
In this section, we show that FRI signals can be perfectly recovered from IF-TEM measurements. We use the fact that the FRI signal $x(t)$ in \eqref{eq:fri} can be perfectly reconstructed from its $2L$ FCSs. We derive conditions on the IF-TEM parameters and the sampling kernel $g(t)$ such that $2L$ FCSs of the input FRI signal can be uniquely recovered from the IF-TEM output. Our approach is similar to the one considered in \cite{rudresh2020time}. Specifically, up to \eqref{eq:mesRel}, our derivations are almost identical. However, in contrast to \cite{rudresh2020time}, we mathematically derive exact recovery guarantees.

\subsection{Fourier-Series Representation of FRI Signals}
\label{sub:Fourier}
We begin by explicitly relating the input signal $x(t)$ of \eqref{eq:fri} to its FSCs (cf. \eqref{eq:xbyh}).

Since $x(t)$ is a $T$-periodic signal, it has a Fourier series representation
\begin{equation}
    x(t) = \sum_{k\in\mathbb{Z}}\hat{x}[k]e^{jk\omega_0 t},
    \label{eq:x_initial}
\end{equation}
where $\omega_0=\frac{2\pi}{T}$. The Fourier-series coefficients $\hat{x}[k]$ are given by 
\begin{equation}
 \hat{x}[k] = \frac{1}{T} \hat{h}(k\omega_0)\sum_{\ell=1}^{L}a_{\ell}e^{-j k\omega_0\tau_{\ell}},
 \label{eq:x_fourier}
\end{equation}
where, $\hat{h}(\omega)$ is the continuous-time Fourier transform of $h(t)$ and we assume that $\hat{h}(k\omega_0) \neq 0$ for $k\in\mathcal{K}$  where $\mathcal{K}$ is a given set of indices. Since $x(t)$ is real-valued, its FSCs $\hat{x}[k]$ are complex conjugate pairs, that is,
\begin{align}
\hat{x}^*[-k] = \hat{x}[k].
\label{eq:complex_conj}
\end{align}

The sequence
\begin{equation}
    \frac{\hat{x}[k]}{\hat{h}(k\omega_0)}= \frac{1}{T}\sum_{\ell=1}^{L}a_{\ell}e^{-j k\omega_0\tau_{\ell}}, 
    \label{eq:xbyh}
\end{equation}
consists of a sum of $L$ complex exponentials. From the theory of high-resolution spectral estimation \cite{stoica}, it is well known that $2L$ consecutive samples of $\frac{\hat{x}[k]}{\hat{h}(k\omega_0)}$ are sufficient to determine $\{a_{\ell}, \tau_{\ell}\}_{\ell=1}^L$. For example, one can apply the well known annihilating filter method \cite{vetterli2002sampling} to compute $\{a_{\ell}, \tau_{\ell}\}_{\ell=1}^L$. In practice, the pulse $h(t)$ has short-duration and wide bandwidth. Hence, there always exist $2L$ or more non-vanishing Fourier samples $\hat{h}(k\omega_0)$ that are computed a priori. To determine the FRI parameters, we need to compute $2L$ consecutive values of $\hat{x}[k]$. Our problem is then reduced to that of uniquely determining the desired number of FSCs from the signal measurements. Since $x(t)$ typically consists of a large number of FSCs, we discuss next a sampling kernel design which removes unnecessary FSCs and thus reduces the sampling rate.
 
\vspace{-1mm}
\subsection{Sampling Kernel}
\label{sub:Kernel}
% In this section, we discuss the sampling kernel structure which enables to measure the Fourier series coefficients of periodic FRI signals. 
Since a minimum of $2L$ FSCs are sufficient for uniquely recovering the FRI signal, the sampling kernel $g(t)$ is designed to remove or annihilate any additional FSCs. The filtered signal $y(t)$ is given by
% \begin{align}
%     y(t) = (x * g)(t)= \sum_{k\in \mathbb{Z}}
%     \hat{x}[k]\hat{g}(k\omega_o)e^{jk\omega_0 t}.
%         \label{eq:yt_by_x}
% \end{align}
\begin{equation}
\begin{split}
    y(t) &= (x * g)(t)= \int_{-\infty}^{\infty} x(\tau)g(t-\tau)d\tau\\
    &=\sum_{k\in \mathbb{Z}}
    \hat{x}[k]\int_{-\infty}^{\infty} g(t-\tau)e^{jk\omega_0 \tau}d\tau\\
    &= \sum_{k\in \mathbb{Z}}
    \hat{x}[k]\, \hat{g}(k\omega_o) \,e^{jk\omega_0 t}.\\
    \end{split}
     \label{eq:yt_by_x}
\end{equation}
To restrict the summation to a finite number of terms and annihilate the unwanted FSCs we define the filter to satisfy the following condition in the Fourier domain:
\begin{equation}
    \hat{g}(k\omega_0) = 
    \begin{cases}
      1 & \text{if $k\in\mathcal{K}$},\\
      0 & \text{otherwise},
    \end{cases}
    \label{eq:kernel}
\end{equation}
where $\mathcal{K}$ is a set of integers such that $|\mathcal{K}|\geq 2L$.

One particular choice of the sampling kernel is a sum-of-sincs (SoS) kernel \cite{tur2011innovation} generated by
\begin{equation}
\hat{g}(\omega) = \sum_{k\in \mathcal{K}} \text{sinc}\left( \frac{\omega}{\omega_0}-k \right), 
\label{eq:SoStime}
\end{equation}
and 
\begin{align}
 g(t) = 
 \begin{cases}
   \displaystyle \sum_{k \in \mathcal{K}} e^{j k\omega_0 t}, \quad t \in (-T/2, T/2]\\
   0, \quad \text{elsewhere}.
 \end{cases}
 \label{eq:sos_ir}
\end{align}
The sampling kernel $g(t)$ is designed to pass the coefficients $\hat{x}[k], k\in\mathcal{K}$ while suppressing all other coefficients $\hat{x}[k], k\not\in\mathcal{K}$. 
Note that one can also apply an ideal lowpass filter with appropriate cutoff frequency to remove the FSCs. However, the impulse response of an ideal lowpass filter has infinite support, whereas the SoS kernel has compact support.

Using a SoS kernel, the filtered signal $y(t)$ is 
\begin{equation}
    y(t) =\sum_{k\in \mathcal{K}}
    \hat{x}[k]\hat{g}(k\omega_0)  e^{jk\omega_0 t}=\sum_{k\in \mathcal{K}}
    \hat{x}[k] e^{jk\omega_0 t}.
    \label{eq:yt_by_x22}
\end{equation}
The filtered signal $y(t)$ is sampled by an IF-TEM which requires its input to be real-valued and bounded. Since $\hat{x}[k]$ are conjugate symmetric, to ensure that $y(t)$ is real valued, the support set $\mathcal{K}$ is chosen to be symmetric around zero, that is, $\mathcal{K}$ is given as
\begin{equation}
    \mathcal{K} = \{-K,\cdots,K\},
    \label{eq:Kset}
\end{equation}
where $K \geq L$ to ensure that there are at-least $2L$ FSCs of $x(t)$ retained in $y(t)$.

From \eqref{eq:fri} and $y(t) = (x*g)(t)$, it can be shown that
\begin{align}
    c \triangleq \max\limits_t |y(t)| &\leq  L\,\, a_{\max} \,\, \|(h*g)\|_{\infty} \\
    &\leq L\,\,a_{\max} \,\,
    \|g\|_{\infty} \|h\|_{1}, 
   \label{eq:c}
\end{align}
where Young's convolution inequality is used. Since $|g(t)| \leq |\mathcal{K}|$ and $\mathcal{K}$ is a finite set, $g(t)$ is bounded. Hence $y(t)$ is bounded provided that $a_{\max} < \infty$ and the pulse $h(t)$ is absolutely integrable. In the remaining of the paper, we assume that both these conditions hold. 
\begin{figure*}[!h]
 \begin{align}
  \mathbf{A}=     \begin{bmatrix}
 e^{-jK\omega_0t_2}-e^{-jK\omega_0t_1}& \cdots & t_2 -t_1  & \cdots & e^{jK\omega_0t_2}-e^{jK\omega_0t_1}\\       
 e^{-jK\omega_0t_3}-e^{-jK\omega_0t_2}& \cdots & t_3 -t_2  & \cdots & e^{jK\omega_0t_3}-e^{jK\omega_0t_2}\\ 
 \vdots&  & \vdots &  & \vdots\\
e^{-jK\omega_0t_N}-e^{-jK\omega_0t_{N-1}}& \cdots & t_N -t_{N-1}  & \cdots & e^{jK\omega_0t_N}-e^{jK\omega_0t_{N-1}}\\  
\end{bmatrix}.
\label{eq:Amat}
  \end{align}
\end{figure*}
%%%%%%%%%%%%%%%%%%%%%%%%%%%%%%%%%%%%%%%%%%%%%%%%%%%%%%%%%%%%%%%%%%%%%%%%%%%%%%%%
\vspace{-1cm}
\subsection{FRI TEM Sampling}
\label{sub:Samp}
% \vspace{-0.25mm}
The IF-TEM input is the filtered signal $y(t)$, which is the $T$-periodic signal defined in \eqref{eq:yt_by_x22}.
The output of the IF-TEM is a set of time instants $\{ t_n\}_{n\in\mathbb{Z}}$. Given $\{t_n\}$ one can determine the measurements $\{y_n\}$ by using \eqref{eq:trigger0}.
The relation between the measurements $y_n$ and the desired FSCs is given by
\begin{equation}
    \begin{split}
    y_n &= \int_{t_n}^{t_{n+1}}y(s)\, ds\\ &= \int_{t_n}^{t_{n+1}} \sum_{k\in \mathcal{K} \setminus \{ 0\} }
    \hat{x}[k]e^{jk\omega_0 t}dt+ \int_{t_n}^{t_{n+1}}\hat{x}[0]dt\\
    &= \sum_{k\in \mathcal{K} \setminus \{ 0\} } \hat{x}[k] \frac{\left( e^{jk\omega_0 t_{n+1}}- e^{jk\omega_0 t_{n}} \right)}{jk\omega_0}+\hat{x}[0] \left( t_{n+1} - t_{n}\right).
    \end{split}
    \label{eq:yx_rel}
\end{equation}

To extract the desired FSCs from \eqref{eq:yx_rel}, we denote by $\mathbf{y}$ the vector $[\int_{t_1}^{t_2}y(t)dt, \int_{t_2}^{t_3}y(t)dt, \cdots, \int_{t_{N-1}}^{t_N}y(t)dt]^{\top}$, where $N$ is the number of time instants in the interval $T$.
In addition, 
\begin{align}
\mathbf{\hat{x}} = \left[-\frac{\hat{x}[-K]}{jK\omega_0}, \cdots, \hat{x}[0],\cdots, 
\frac{\hat{x}[K]}{jK\omega_0}\right]^{\top}.
\label{eq:xhat}
\end{align}
Let, the matrix $\mathbf{A}$ be given in \eqref{eq:Amat}. With this notation, \eqref{eq:yx_rel} can be written in the following matrix form:
\begin{equation}
    \mathbf{y}=\mathbf{A}\mathbf{\hat{x}}.
    \label{eq:mesRel}
\end{equation}
This equation describes the relation between the IF-TEM measurements and the FSCs. Our goal is to determine these FSCs embedded in $\mathbf{\hat{x}}$. As previously mentioned, we can perfectly recover the FRI parameters from $\mathbf{\hat{x}}$. Thus, a natural question to ask is under what conditions the matrix $\mathbf{A}$ has a unique left-inverse, that is, the matrix $\mathbf{A}$ has full column rank. If invertibility is guaranteed, then the Fourier coefficients vector can be computed as  
\begin{equation}
    \mathbf{\hat{x}} =\mathbf{A}^{\dagger} \mathbf{y},
    \label{eq:inverseFCS}
\end{equation}
where $\mathbf{A}^\dagger$ denotes the Moore-Penrose inverse of $\mathbf{A}$ in \eqref{eq:Amat}.

In \cite{rudresh2020time}, the matrix is assumed to be uniquely left-invertible. The authors showed via simulations that the matrix has full column rank, however, a concrete proof is not presented. In the following we show that for an adequate number of firings, the matrix is uniquely invertible.

%%%%%%%%%%%%%%%%%%%%%%%%%%%%%%%%%%%%%%%
\subsection{Recovery Guarantees}
\label{sub:REC}
In this section, we present our main results where we show that for the sampling kernel choice \eqref{eq:sos_ir}, we can uniquely identify the FSCs from the IF-TEM time instants. Specifically, we show that for a particular choice of the IF-TEM parameters, the matrix $\mathbf{A}$ defined in \eqref{eq:Amat} is left invertible.
Our results are summarized in the following theorems.

\begin{theorem}
\label{theorem:Ainv}
Consider a positive integer $K$ and a number $T>0$. Let $0\leq t_1<t_2<\cdots<t_N < T$ for an integer $N$, and $\omega_0 = \frac{2\pi}{T}$. Then the matrix $\mathbf{A}$ defined in \eqref{eq:Amat} is left-invertible provided that $N \geq 2K+2$.
\end{theorem}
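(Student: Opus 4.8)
The plan is to show that $\mathbf{A}$ has a trivial kernel, i.e. that $\mathbf{A}\mathbf{v} = \mathbf{0}$ forces $\mathbf{v} = \mathbf{0}$ whenever $N \geq 2K+2$, which is equivalent to left-invertibility. So suppose $\mathbf{v} = [v_{-K},\dots,v_0,\dots,v_K]^\top$ lies in the kernel. Reading off the structure of $\mathbf{A}$ in \eqref{eq:Amat}, each row is the difference of the vector-valued function
\begin{equation}
\phi(t) = \bigl[e^{-jK\omega_0 t},\, \cdots,\, t,\, \cdots,\, e^{jK\omega_0 t}\bigr]^\top
\end{equation}
evaluated at two consecutive sample times. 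Hence $\mathbf{A}\mathbf{v} = \mathbf{0}$ says exactly that the scalar function
\begin{equation}
F(t) \triangleq v_0\, t + \sum_{k\in\mathcal{K}\setminus\{0\}} v_k\, e^{jk\omega_0 t}
\end{equation}
satisfies $F(t_{n+1}) = F(t_n)$ for $n=1,\dots,N-1$, i.e. $F$ takes the same value at all $N$ points $t_1 < \cdots < t_N$. The first step is therefore to reduce the claim to: \emph{such an $F$ cannot agree on $N \geq 2K+2$ distinct points unless all $v_k = 0$.}

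The second step handles the zero-frequency term separately. If $v_0 \neq 0$, then $F(t) - F(t_1)$ is a genuinely aperiodic function (the linear term $v_0 t$ grows), and I would argue it has at most finitely many zeros on $[0,T)$ — in fact one can bound the number of zeros of $v_0 t + (\text{trigonometric polynomial of degree }K)$ by $2K+1$ via a standard argument (e.g. counting sign changes of the $2K$-th derivative, or rewriting $F'(t)$, which is itself a degree-$K$ trig polynomial plus the constant $v_0$, and noting it has at most $2K$ zeros, so $F$ is monotone on at most $2K+1$ intervals and hence takes any value at most $2K+1$ times). Since $N \geq 2K+2 > 2K+1$, we get a contradiction, forcing $v_0 = 0$.

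The third step is the purely trigonometric case $v_0 = 0$: now $G(t) = \sum_{k=-K}^{K} v_k e^{jk\omega_0 t}$ (with $v_0=0$) is a trigonometric polynomial of degree at most $K$, so $G(t) - G(t_1)$ is also a trig polynomial of degree at most $K$, which has at most $2K$ zeros in a period $[0,T)$ unless it is identically zero. Again $N \geq 2K+2 > 2K$ distinct roots force $G \equiv G(t_1)$, i.e. all the nonconstant coefficients vanish; combined with $v_0 = 0$ this gives $\mathbf{v} = \mathbf{0}$. I expect the main obstacle to be the first case: making the zero-count bound for the mixed linear-plus-trigonometric function $F$ fully rigorous (the trig-polynomial-only estimate is classical), so I would be careful there — likely by differentiating to kill the linear term's effect on periodicity and invoking the standard bound on zeros of trigonometric polynomials of degree $K$, then integrating back. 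A minor point to check is that the $2K$-th-difference / derivative argument correctly accounts for the endpoints and that "same value at $N$ points" indeed yields $N-1$ interior zeros of $F(t)-F(t_1)$ — enough to exceed the allowed count when $N \geq 2K+2$.
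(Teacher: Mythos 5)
Your proposal is correct and follows essentially the same route as the paper's proof: both reduce $\mathbf{A}\mathbf{v}=\mathbf{0}$ to the statement that a function of the form $v_0 t + (\text{trigonometric polynomial of degree } K)$ takes a common value at the $N$ firing times, then split on whether $v_0=0$ (a nonzero degree-$K$ trigonometric polynomial has at most $2K$ zeros per period) or $v_0\neq 0$ (Rolle/extrema counting on the derivative, giving at most $2K+1$ level crossings), so that $N\geq 2K+2$ forces $\mathbf{v}=\mathbf{0}$. The only detail worth adding is that for a general complex kernel vector the function $F$ is complex-valued, so the Rolle-type step should be applied to its real and imaginary parts separately (the paper sidesteps this by restricting to conjugate-symmetric coefficient vectors, for which the corresponding $f(t)$ is real).
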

\begin{proof}
See the Appendix (cf Case-2).
\end{proof}

Theorem~\ref{theorem:Ainv} implies that there should be a minimum of $2K+2$ IF-TEM time instants within an interval of $T$ to enable recovery of the FSCs, and subsequently reconstruction of the FRI signal. To ensure this, the minimum firing rate $\frac{b-c}{\kappa\delta}$ (cf. \eqref{eq:firing_bounds}) should be chosen such that 
\begin{align}
    \frac{b-c}{\kappa\delta} \geq \frac{2K+2}{T}.
    \label{eq:minFR}
\end{align}

By combining Theorem \ref{theorem:Ainv}, the result in \eqref{eq:minFR}, and the fact that $2L$ FSCs are sufficient to recover the FRI parameters, we summarize the sampling and reconstructions of FRI signals using IF-TEM in the following theorem.

\begin{theorem}
\label{theorem:FRI0}
Let $x(t)$ be a $T$-periodic FRI signal of the following form
\begin{align*}
      x(t) = \sum_{p\in\mathbb{Z}}\sum_{\ell=1}^L a_{\ell} h(t-\tau_{\ell}-p T),
\end{align*}
where $t_\ell \in [0, T)$, $|a_\ell|<\infty$, and $L$ is known. We assume that the amplitudes $\{a_\ell\}_{\ell=1}^L$ are finite, and the pulse $h(t)$ is known and absolutely integrable. Consider the sampling mechanism shown in Fig..~\ref{fig:TEMfri}. Let the sampling kernel $g(t)$ satisfy
\begin{align*}
    \hat{g}(k\omega_0) =       
    \begin{cases}
      1 & \text{if $k\in \mathcal{K}=\{-K,\cdots,K \}$},\\
      0 & \text{otherwise},
    \end{cases}
    \end{align*}
and $\max\limits_t|(h*g)(t)|<\infty$. Choose the real positive TEM parameters $\{b, \kappa, \delta\}$ such that $c<b<\infty$, where $c$ is defined in \eqref{eq:c}, and
\begin{equation}           
\frac{b-c}{\kappa\delta} \geq \frac{2K+2}{T}. \label{eq:sample_bound0}
\end{equation} 
Then. the parameters $\{a_{\ell}, \tau_{\ell}\}_{\ell=1}^L$ can be perfectly recovered from the TEM outputs if $K \geq L$.
\end{theorem}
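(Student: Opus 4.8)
The plan is to assemble the statement from three ingredients already in place: the Fourier-series description of the filtered signal (Section~\ref{sub:Fourier}--\ref{sub:Kernel}), the IF-TEM firing-rate bound \eqref{eq:firing_bounds}, and the left-invertibility result of Theorem~\ref{theorem:Ainv}. First I would check that the IF-TEM can be run on $y(t) = (x*g)(t)$. By \eqref{eq:yt_by_x} and the kernel condition on $\hat g(k\omega_0)$, the filtered signal is $T$-periodic with $y(t) = \sum_{k\in\mathcal{K}}\hat{x}[k]e^{jk\omega_0 t}$; the symmetry $\mathcal{K} = \{-K,\dots,K\}$ together with \eqref{eq:complex_conj} makes $y(t)$ real-valued, and the bound \eqref{eq:c} with the hypothesis $\max_t|(h*g)(t)|<\infty$ gives $c := \max_t|y(t)|<\infty$. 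Hence the choice $c<b<\infty$ is admissible and the IF-TEM of Section~\ref{subsec:TEM} outputs a well-defined strictly increasing sequence of firing instants $\{t_n\}$.

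Next I would count firings and set up the linear system. By \eqref{eq:consecutive_time}--\eqref{eq:firing_bounds} and the hypothesis \eqref{eq:sample_bound0}, any interval of length $T$ contains at least $T\frac{b-c}{\kappa\delta}\geq 2K+2$ firing instants. Using $T$-periodicity of $y(t)$, I may translate the observation window so that, without loss of generality, $0\leq t_1<t_2<\cdots<t_N<T$ with $N\geq 2K+2$; this is legitimate because the entries of the matrix $\mathbf{A}$ in \eqref{eq:Amat} depend on the $t_n$ only through the consecutive differences and through $e^{\pm jk\omega_0 t_n}$, all of which are $T$-periodic. Forming the measurements $y_n$ from $\{t_n\}$ and the TEM parameters via \eqref{eq:trigger0}, relation \eqref{eq:yx_rel} yields the system $\mathbf{y}=\mathbf{A}\hat{\mathbf{x}}$ of \eqref{eq:mesRel}, where $\mathbf{A}$ is $(N-1)\times(2K+1)$ and $\hat{\mathbf{x}}$ collects the scaled coefficients as in \eqref{eq:xhat}. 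Since $N\geq 2K+2$, Theorem~\ref{theorem:Ainv} gives that $\mathbf{A}$ has full column rank, so $\hat{\mathbf{x}}=\mathbf{A}^\dagger\mathbf{y}$ recovers $\hat{\mathbf{x}}$ exactly, and hence all Fourier-series coefficients $\{\hat{x}[k]\}_{k\in\mathcal{K}}$ of $x(t)$.

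Finally I would run the spectral-estimation step. Because $K\geq L$, we have $2K+1\geq 2L$ consecutive recovered values of $\hat{x}[k]$, $k\in\mathcal{K}$, with $\hat h(k\omega_0)\neq 0$ on $\mathcal{K}$ by the standing assumption (known a priori since $h$ is known). Dividing by $\hat h(k\omega_0)$ as in \eqref{eq:xbyh} produces $2L$ consecutive samples of the sum of $L$ complex exponentials $\frac{1}{T}\sum_{\ell=1}^{L}a_\ell e^{-jk\omega_0\tau_\ell}$; the annihilating-filter/Prony method \cite{vetterli2002sampling,eldar2015sampling} then determines the $L$ distinct nodes $e^{-j\omega_0\tau_\ell}$, hence $\{\tau_\ell\}_{\ell=1}^L\subset[0,T)$ uniquely, and the amplitudes $\{a_\ell\}_{\ell=1}^L$ follow from the resulting $L\times L$ Vandermonde system, which is nonsingular because the $\tau_\ell$ are distinct. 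This recovers $x(t)$ of \eqref{eq:fri}.

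I expect the only delicate point in this assembly to be the reduction to the hypothesis $0\leq t_1<\cdots<t_N<T$ of Theorem~\ref{theorem:Ainv}: one must argue from \eqref{eq:consecutive_time} that a full period contains at least $2K+2$ firings and exploit $T$-periodicity to align them with that hypothesis. Everything else is bookkeeping, since the substantive linear-algebra content — that $N\geq 2K+2$ firings force $\mathbf{A}$ to have full column rank — is handled separately in the proof of Theorem~\ref{theorem:Ainv}.
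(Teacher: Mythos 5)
Your proposal is correct and follows essentially the same route as the paper, which obtains Theorem~\ref{theorem:FRI0} by combining Theorem~\ref{theorem:Ainv}, the firing-rate condition \eqref{eq:minFR}, and the standard fact that $2L$ consecutive FSCs determine $\{a_\ell,\tau_\ell\}_{\ell=1}^L$ via the annihilating filter. You merely spell out the bookkeeping (real-valuedness and boundedness of $y(t)$, the firing count over one period, and the translation of the $t_n$ into $[0,T)$) that the paper leaves implicit.
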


Based on Theorem~\ref{theorem:FRI0}, a reconstruction algorithm to compute the FRI parameters from TEM firings is presented in Algorithm~\ref{alg:algorithm_th1}.

\subsection{IF-TEM Parameter Selection}
\label{subsec:params}
The IF-TEM parameters are selected such that there is a minimum of $N\geq 2L+2$ time instants $\{t_n\}_{n=1}^N$ within a time interval $T$.
Thus, the minimum firing rate that enables accurate reconstruction is $\frac{2L+2}{T}$. The maximum firing rate is bounded by $\frac{b+c}{\kappa\delta}$.
While the threshold $\delta$, which is a parameter of the comparator, is easier to control, the integrator constant $\kappa$ is a parameter of the integrator, and it is usually fixed. Thus, assuming a fixed value of $b$ and $\kappa$, choosing small $\delta$ results in a large firing rate above the minimum desirable value of $\frac{2L+2}{T}$.
% Thus, to strike a balance, a possible solution is to choose the parameters as follows: $b - c = \epsilon >0, \delta =\epsilon$ and $ \kappa = \frac{T}{2L+2}$.
In practice, both $b$ and $\delta$ are generated through a DC voltage source, and therefore large values of bias and threshold require high power. Hence, to minimize the power requirements, it is desirable for $b$ and $\delta$ to be as small as possible.

%%%%%%%%%%%%%%%%%%%%%%%%%%%%%%%%%%%%
\subsection{Simulations}
\label{sub:sim}
We next numerically validate Theorem \ref{theorem:FRI0}. In Fig. \ref{fig:deltas}, we consider $h(t)$ as a Dirac impulse with time period $T=1$ seconds. We consider the simulations for $L = 3$, $5$, and $10$. The time delays and amplitudes are selected uniformly at random over $(0, 1)$. The input signal $x(t)$ is filtered using an SoS sampling kernel with $\mathcal{K} = \{-K,\cdots,K\}$, where $K=L$. The filtered output $y(t)$ is sampled using an IF-TEM which has a threshold $\delta = 0.07$ and $\kappa=1$. The bias  of the IF-TEM is set as $b=0.9$, $1.3$, and $2.5$ for $L = 3$, $5$, and $10$ respectively.
The parameters are chosen to satisfy the inequality in \eqref{eq:sample_bound0}, and resulted in $13$, $18$, and $36$ samples per period for $L = 3$, $5$, and $10$. As per Theorem \ref{theorem:FRI0}, $8$, $12$, and $22$ samples per period are sufficient. 
The reconstruction was found to be stable even for a larger number of impulses.
We summarize the choice of IF-TEM parameters and the resulting firing rate in Table \ref{Tab:Tcr}.
In Fig. \ref{fig:spline}, using the same filter, we depict the estimation of a stream of pulses with $L=3$, $\{a_{\ell}\}= \{0.5, -0.45, 0.4\}$, and $\{ \tau_{\ell}\} = \{0.2, 0.33, 0.8\}$. The IF-TEM $b,\delta,\kappa$ which satisfy the inequality in \eqref{eq:sample_bound0}, are $0.9,0.07,1$ respectively. The resulting firing rate is as few as 13 samples/s.
\begin{table}[ht]
\caption{IF-TEM Parameters Choice for Estimation}
\begin{center}
\begin{tabular}{ |c|c|c|c|c|c| } 
 \hline
 $L$ & $b$ & $\delta$ & $\kappa$ & $\omega_0$ & $F_R$ (samples/s)\\
 \hline
 3 & 0.9 & 0.07 & 1 & $2\pi$ & 13 \\
 \hline
 5 & 1.3 & 0.07 & 1 & $2\pi$ & 18 \\
  \hline
 10 & 2.5 & 0.07 & 1 & $2\pi$ & 36 \\

    \hline
    \end{tabular}
\end{center}
\label{Tab:Tcr}
\end{table}

	\begin{figure*}[hbt!]
		\centering
		\includegraphics[width=1\textwidth]{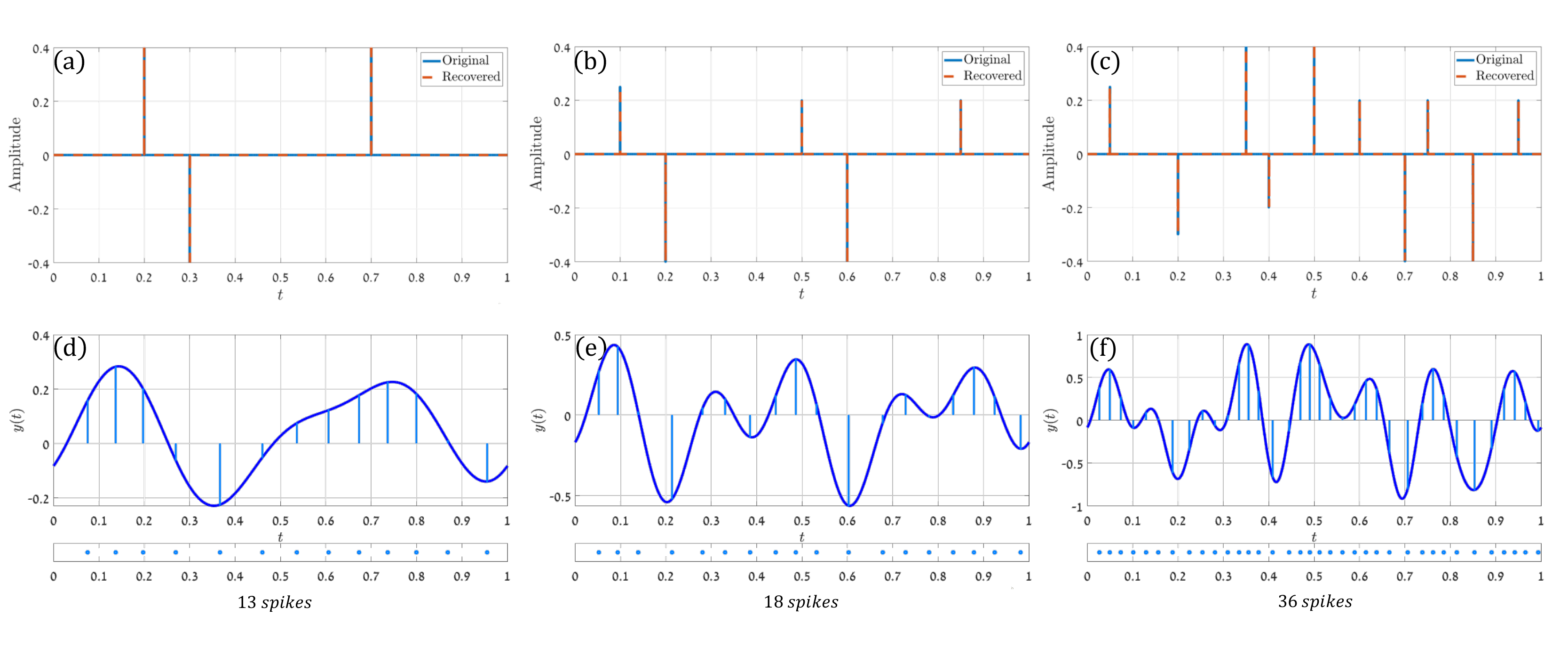}
		\caption{Sampling and reconstruction of stream of Dirac impulses using TEM by applying the SoS kernel. Both $\tau_{\ell}$ and $a_{\ell}$ are chosen uniformly at random over $(0, 1)$. (a)-(c): the input signal and its reconstruction for $L=3$, $L=5$, and $L=10$ respectively. (e)-(g): the filtered signal $y(t)$ and the time instants $t_n$ for $L=3$, $L=5$, and $L=10$ respectively.
 }
		\label{fig:deltas}
	\end{figure*}
	
		\begin{figure}[t!]
		\centering
		\includegraphics[width= 3
		in]{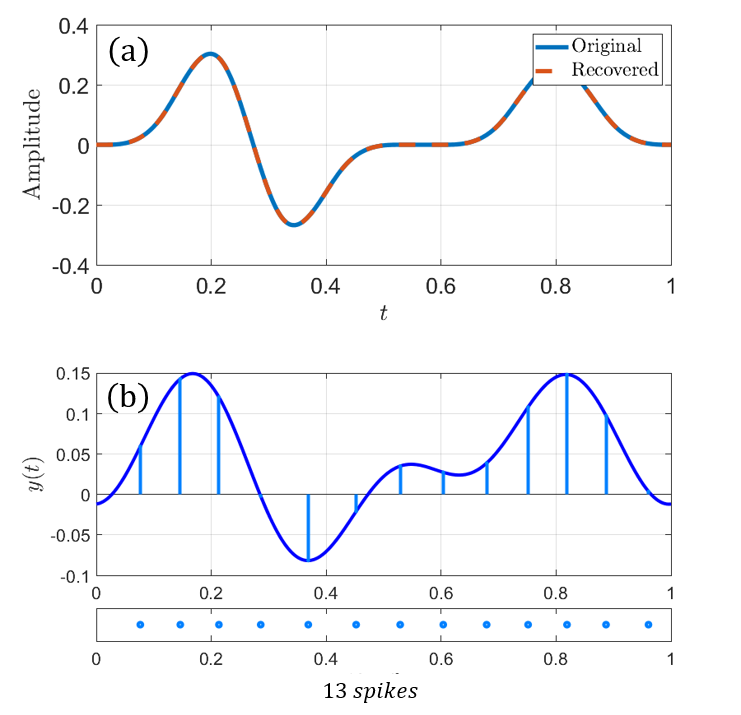}
        \caption{Sampling and reconstruction of stream of pulses using TEM by applying the SoS kernel. (a): the input signal and its reconstruction for $L=3$. (b): the filtered signal $y(t)$ and the time instants $t_n$ for $L=3$.}
		\label{fig:spline}
	\end{figure}
	
\begin{algorithm}[!h]
\begin{algorithmic}[1]
\caption{Reconstruction of a $T$-periodic FRI signal using Theorem \ref{theorem:FRI0}.}
\label{alg:algorithm_th1}
\Statex \textbf{Input:} $N\geq 2K+2$ spike times $\{t_n\}_{n=1}^N$ in a period $T$.
\State Let $n \leftarrow 1$
\Loop
        \While {$n<N-1$}  
             \State   Compute  
            $y_n = -b(t_{n+1}-t_n)+\kappa\delta$
            \State $n:=n+1$.
\EndWhile
\State Compute Fourier coefficients vector $\mathbf{\hat{x}} = \mathbf{A}^\dagger\,\mathbf{y}$ in \eqref{eq:inverseFCS}.
    \State   Estimate $\{(a_{\ell},\tau_{\ell})_{\ell=1}^L$ using spectral
    analysis method.
\EndLoop 
\Statex \textbf{Output:}
$\{(a_{\ell},\tau_{\ell})_{\ell=1}^L$.
\end{algorithmic}
\end{algorithm}

%%%%%%%%%%%%%%%%%%%%%%%%%%%%%%%%%%%%%%%%%%%%%%%%%%%%%%%%%%%%%%%%%%%%%%%%%%%%%%%%

\section{Noise Robustness}
While the results in \cite{rudresh2020time} and the previous section assume that there is no measurement noise, in practice, the signals are contaminated by noise. In the presence of noise, the IF-TEM outputs or time instants are perturbed. While using Algorithm~\ref{alg:algorithm_th1}, this results in a perturbation in the matrix $\mathbf{A}$ as well as the measurements $\mathbf{y}$ in \eqref{eq:mesRel}. In this case, when computing the FSCs using \eqref{eq:mesRel}, the stability of $\mathbf{A}$, which is measured by the condition number of the matrix, impacts the results.
Next, we show that by excluding zero from $\mathcal{K}$, perfect recovery is possible, and in the noisy scenario, the method is more robust.

As shown in Appendix, the matrix $\mathbf{A}$ and $\mathbf{B}$ have full column ranks provided that vector $\tilde{x}[0] \mathbf{t}$ is not in the column space of matrix $\mathbf{W}$ (cf. (41)). The condition holds for $N >2K+1$. Note that $\bar{x}[0] \neq 0$ when we consider matrix $\mathbf{A}$ and $\bar{x}[0] = 0$ in the context of matrix $\mathbf{B}$. For $N > 2K+1$ and $\bar{x}[0] \neq 0$, it is possible that there exist a set of time encodings $\{t_n\}_{n=1}^N$ such that vector $\bar{x}[0] \mathbf{t}$ comes arbitrary close to $\mathbf{W}$'s column space. It implies that the straight line $r(t)$ is close to the trigonometric polynomial $f(t)$ at $\{t_n\}_{n=1}^N$.
\begin{figure}[!h]
	\centering
	\subfigure[Condition numbers of matrices $\mathbf{A}$ is 30 and $\mathbf{B}$ is 3]{\includegraphics[width=2 in]{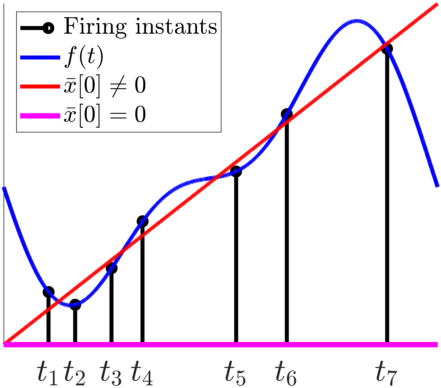}}
		\subfigure[Condition numbers of matrices $\mathbf{A}$ is 3000 and $\mathbf{B}$ is 5]{\includegraphics[width=2.1 in]{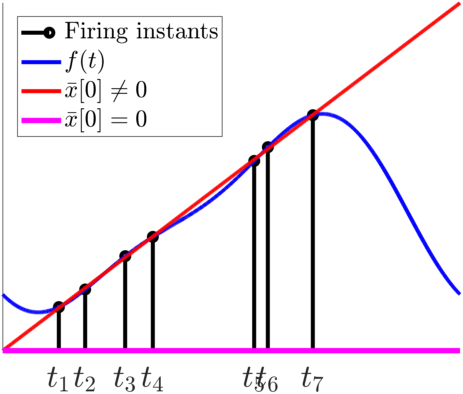}}
	\caption{Examples $f(t)$ and $r(t)$ when the matrix $\mathbf{A}$ has large condition number.}
	\label{fig:cond_bad}
\end{figure}
In such case, from (41), we can determine a set of Fourier coefficients $\{\bar{x}[k]\}_{k=-K}^K$ such that $\mathbf{Vx}\approx c \mathbf{1}_N$ and consequently $\mathbf{Ax}\approx  \mathbf{0}_N$, where $\mathbf{0}_N$ is a zero vector of length $N$. Hence for that particular set of $\{t_n\}_{n=1}^N$, $\mathbf{A}$ becomes ill-conditioned. However, when $\bar{x}[0] = 0$, we observe that it is less likely that $r(t)$ with zero-slope becomes closer to $f(t)$ at time-encoding instants. A couple of illustrative examples depicting the intuition are shown in Fig~\ref{fig:cond_bad} where $K = 2$ and $N = 7$. In Fig.~\ref{fig:cond_bad}(a), the condition number of matrices $\mathbf{A}$ and $\mathbf{B}$ are $30$ and $3$, respectively. Although, the condition numbers are small, we note that the condition number of $\mathbf{A}$ is ten times higher than that of $\mathbf{B}$. This is because the straight line correspond to $\mathbf{A}$ (shown in red) is closer to the trigonometric polynomial at the time-encodings than that of $\mathbf{B}$ (in magenta). In the example shown in Fig.~\ref{fig:cond_bad}(b), the condition number of matrices $\mathbf{A}$ is 3000 as the $|r(t) - f(t)|$ is small for $t \in \{t_n\}_{n=1}^N$. Whereas, $|r(t) - f(t)|$ is relatively large for $\bar{x}[0] = 0$ and consequently $\mathbf{B}$ has lower condition number. In other words, it is less likely that a zero vector comes closer to the range space of matrix $\mathbf{W}$ compared to a vector $\mathbf{t}$. Hence, matrix $\mathbf{B}$ has a better condition number compared to that of matrix $\mathbf{A}$.    
\subsection{Exclusion of Zero}
\label{subsec:mthd2}
In this section, we show that by excluding the zero frequency in $\mathcal{K}$ we achieve perfect reconstruction for FRI signals of the form of \eqref{eq:fri}.
In this case, the resulting matrix has a much more stable structure compared to the matrix $\mathbf{A}$ of \eqref{eq:Amat}. 
Suppose that we remove $k=0$, following \eqref{eq:yx_rel}, we have
\begin{equation}
    \begin{split}
    y_n &= \int_{t_n}^{t_{n+1}}y(s)\, ds\\ &= \int_{t_n}^{t_{n+1}} \sum_{k\in \mathcal{K} \setminus \{ 0\} }
    \hat{x}[k]e^{jk\omega_0 t}dt\\
    &= \sum_{k\in \mathcal{K} \setminus \{ 0\} } \hat{x}[k] \frac{\left( e^{jk\omega_0 t_{n+1}}- e^{jk\omega_0 t_{n}} \right)}{jk\omega_0}.
    \end{split}
    \label{eq:yx_relProof2}
\end{equation}

To extract the designed FSCs from \eqref{eq:yx_relProof2}, we denote by $\mathbf{y}_0$ the vector $[\int_{t_1}^{t_2}y(t)dt, \int_{t_2}^{t_3}y(t)dt, \cdots, \int_{t_{N-1}}^{t_N}y(t)dt]^{\top}$, where $N$ is the number of time instants in the interval $T$.
The measurements $\mathbf{y}_0$ and the FSCs
\begin{align}
\mathbf{\hat{x}}_0 = \left[-\frac{\hat{x}[-K]}{jK\omega_0}, \cdots, -\frac{\hat{x}[-1]}{j\omega_0},\frac{\hat{x}[1]}{j\omega_0},\cdots, 
\frac{\hat{x}[K]}{jK\omega_0}\right]^{\top}
\label{eq:xhatProof2}
\end{align}
are now related as 
\begin{equation}
    \mathbf{y}_0=\mathbf{B}\mathbf{\hat{x}}_0,
    \label{eq:mesRel2}
\end{equation}
where $\mathbf{B}$ is given as in \eqref{eq:Bmat}. Next, we show that the matrix $\mathbf{B}$ has full column rank and is uniquely left invertible.

\begin{figure*}[!h]
 \begin{align}
  \mathbf{B}=     \begin{bmatrix}
 e^{-jK\omega_0t_2}-e^{-jK\omega_0t_1}& \cdots & e^{-j\omega_0t_2}-e^{-j\omega_0t_1}&e^{j\omega_0t_2}-e^{j\omega_0t_1}  & \cdots & e^{jK\omega_0t_2}-e^{jK\omega_0t_1}\\       
 e^{-jK\omega_0t_3}-e^{-jK\omega_0t_2}& \cdots & e^{-j\omega_0t_3}-e^{-j\omega_0t_2}&e^{j\omega_0t_3}-e^{j\omega_0t_2}  & \cdots & e^{jK\omega_0t_3}-e^{jK\omega_0t_2}\\ 
 \vdots&  & \vdots & \vdots&  & \vdots\\
e^{-jK\omega_0t_N}-e^{-jK\omega_0t_{N-1}}& \cdots &e^{-j\omega_0t_N}-e^{-j\omega_0t_{N-1}}&e^{j\omega_0t_N}-e^{j\omega_0t_{N-1}}  & \cdots & e^{jK\omega_0t_N}-e^{jK\omega_0t_{N-1}}\\  
\end{bmatrix}.
\label{eq:Bmat}
  \end{align}
\end{figure*}

\begin{theorem}
\label{theorem:Binv}
Consider a positive integer $K$ and a number $T>0$. Let $0\leq t_1<t_2<\cdots<t_N < T$ for an integer $N$, and $\omega_0 = \frac{2\pi}{T}$. Then the matrix $\mathbf{B}$ defined in \eqref{eq:Bmat} is left-invertible provided that $N \geq 2K+2$.
\end{theorem}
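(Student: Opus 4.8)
The plan is to show that $\mathbf{B}$ has trivial null space by reinterpreting any vector in $\ker\mathbf{B}$ as the coefficient sequence of a zero-mean trigonometric polynomial that is forced to be constant on all $N$ firing instants, and then applying the fundamental theorem of algebra. Since $\mathbf{B}$ has $N-1\geq 2K+1$ rows and $2K$ columns, left-invertibility is equivalent to $\ker\mathbf{B}=\{\mathbf{0}\}$, so it suffices to rule out nonzero null vectors.

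First I would take $\mathbf{c}=(c_k)_{k\in\mathcal{K}\setminus\{0\}}$ with $\mathbf{B}\mathbf{c}=\mathbf{0}$ and set $f(t)=\sum_{k\in\mathcal{K}\setminus\{0\}}c_k e^{jk\omega_0 t}$. Reading off the rows of $\mathbf{B}$ in \eqref{eq:Bmat}, the $n$-th coordinate of $\mathbf{B}\mathbf{c}$ is exactly $f(t_{n+1})-f(t_n)$, so $\mathbf{B}\mathbf{c}=\mathbf{0}$ is equivalent to $f(t_1)=f(t_2)=\cdots=f(t_N)=:v$. Next I would substitute $z=e^{j\omega_0 t}$: because $t_1,\dots,t_N$ are distinct points of $[0,T)$ and $\theta\mapsto e^{j\theta}$ is injective there, the numbers $z_n:=e^{j\omega_0 t_n}$ are $N$ distinct points on the unit circle. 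Define $P(z):=z^{K}\big(\sum_{k\in\mathcal{K}\setminus\{0\}}c_k z^{k}-v\big)$. Expanding, $P$ is an ordinary polynomial of degree at most $2K$ --- its top term is $c_K z^{2K}$, its constant term is $c_{-K}$, and the subtracted constant $v$ only contributes a middle term $-v z^{K}$ --- and $P(z_n)=z_n^{K}\bigl(f(t_n)-v\bigr)=0$ for every $n$. Since $N\geq 2K+2>2K\geq\deg P$, a nonzero polynomial of degree at most $2K$ cannot vanish at $N$ distinct points, so $P\equiv 0$; comparing coefficients forces $c_k=0$ for all $k\in\mathcal{K}\setminus\{0\}$ (and $v=0$ as a by-product). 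Hence $\mathbf{c}=\mathbf{0}$, $\mathbf{B}$ has full column rank $2K$, and $\mathbf{B}^{\dagger}\mathbf{B}=\mathbf{I}$, i.e.\ $\mathbf{B}$ is left-invertible, which is exactly what \eqref{eq:mesRel2} needs for unique recovery of the nonzero FSCs.

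The step that needs the most care is the bookkeeping introduced by the common value $v$: because the $k=0$ column has been removed from $\mathbf{B}$ (in contrast to $\mathbf{A}$ of \eqref{eq:Amat}), the ``constant part'' $v$ of $f$ is not known to vanish in advance, and one must check that reinstating it as $-v z^{K}$ leaves $\deg P\leq 2K$ so that the root count still closes; once that is observed, the argument is essentially immediate and in fact only uses $N\geq 2K+1$. I would also note that the companion statement for $\mathbf{A}$ genuinely needs $N\geq 2K+2$ because there the deleted-and-reinstated object is the \emph{non-periodic} linear term $\hat{x}[0]\,t$, for which this polynomial-in-$z$ trick fails and one must instead use the slope/Rolle-type comparison between the straight line $r(t)$ and the trigonometric polynomial $f(t)$ discussed around \eqref{eq:Bmat} and in the Appendix; I expect that comparison --- not anything in the $\mathbf{B}$ case above --- to be the only genuinely delicate point in the combined argument.
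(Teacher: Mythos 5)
Your proof is correct, and at its core it follows the same reduction as the paper's: a null vector of $\mathbf{B}$ is reinterpreted as a trigonometric polynomial of order $K$ (with no DC term) taking a common value at the $N$ distinct instants $t_1,\dots,t_N$, which is then ruled out by a zero-counting bound --- this is exactly Case-1 of the Appendix, where the line $r(t)=\bar{x}[0]\,t$ degenerates to a constant. The differences are minor but worth noting: where the paper cites Powell for the fact that an order-$K$ trigonometric polynomial has at most $2K$ zeros per period, you prove it from scratch via the substitution $z=e^{j\omega_0 t}$ and the fundamental theorem of algebra applied to $P(z)=z^K\bigl(\sum_k c_k z^k - v\bigr)$; this makes the argument self-contained and, more importantly, valid for an \emph{arbitrary} complex null vector, whereas the paper's formula $f(t)=c-\sum_{k=1}^K|\bar{x}[k]|\cos(k\omega_0 t+\angle\bar{x}[k])$ implicitly restricts attention to conjugate-symmetric coefficient vectors (a restriction that can be justified, but is not justified in the paper). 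You are also right on both of your side remarks: the $\mathbf{B}$ argument already closes at $N\geq 2K+1$ (as does the paper's Case-1), the stated $2K+2$ being inherited from Theorem~\ref{theorem:Ainv} for uniformity; and it is only the non-periodic linear term $\hat{x}[0]\,t$ in the $\mathbf{A}$ case that forces the paper's Rolle-type intersection count, to which your polynomial trick does not apply.
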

\begin{proof}
The proof follows the same line as that of Theorem~\ref{theorem:Ainv} with the constraint $\hat{x}[0] = 0$ as detailed in Case-1 in the Appendix.
\end{proof} 
\noindent Since the left-inverse of $\mathbf{B}$ exists, the Fourier coefficients vector is computed as  
\begin{equation}
    \mathbf{\hat{x}}_0 =\mathbf{B}^{\dagger} \mathbf{y}_0.
    \label{eq:inverseFCSB}
\end{equation}

Although, the FSCs are computed uniquely, they are not consecutive unlike the FSCs computed in Theorem~\ref{theorem:FRI0}. Since high resolution spectral estimation techniques such as annihilating filter requires $2L$ consecutive FSCs, to uniquely determine the FRI parameters, we need $K \geq 2L$. This results in twice the firing rate compared to that in Theorem~\ref{theorem:FRI0}. An alternative approach to reduce the firing rate is to assume that the time-delays are on a grid. In this case, determination of time-delays and amplitudes of the FRI signal from FCSs is cast as a compressive sensing problem \cite[Section V-B]{bar2014sub}. This problem is efficiently solved from $2L$ FSCs, that are not necessarily consecutive, by using sparse recovery approaches such as orthogonal matching pursuit (OMP) \cite[Ch. 11]{eldar2015sampling}. Hence, by assuming that the time-delays of the FRI signal are on a grid, we require $K \geq L$.

The minimum firing rate for the IF-TEM is 
\begin{align}
    \frac{b-c}{\kappa\delta} \geq \frac{2K+2}{T},
    \label{eq:minFR2}
\end{align}
where $K \geq 2L$ for off-grid time-delays and $K \geq L$ for time-delays on-grid.

By combining Theorem \ref{theorem:Binv} with the result in \eqref{eq:minFR2}, we summarize the sampling and reconstruction of FRI signals using IF-TEM in the following theorem.

\begin{theorem}
\label{theorem:FRI}
Let $x(t)$ be a $T$-periodic FRI signal of the following form
\begin{align*}
      x(t) = \sum_{p\in\mathbb{Z}}\sum_{\ell=1}^L a_{\ell} h(t-\tau_{\ell}-p T),
\end{align*}
where the number of FRI signals $L$ is known, and $h(t)$ is a signal with known pulse shape. Consider the sampling mechanism shown in Fig.~\ref{fig:TEMfri}. Let the sampling kernel $g(t)$ satisfy
\begin{align*}
    \hat{g}(k\omega_0) =       
    \begin{cases}
      1 & \text{if $k\in \mathcal{K}=\{-K,\cdots,-1,1,\cdots,K \}$},\\
      0 & \text{otherwise},
    \end{cases}
    \end{align*}
 and $\max\limits_t|(h*g)(t)|<\infty$. Choose the real positive TEM parameters $\{b, \kappa, \delta\}$ such that $c<b<\infty$, where $c$ is defined in \eqref{eq:c}, and
\begin{equation}           
\frac{b-c}{\kappa\delta} \geq \frac{2K+2}{T}. \label{eq:sample_bound}
\end{equation} 
Then, the parameters $\{a_{\ell}, \tau_{\ell}\}_{\ell=1}^L$ can be perfectly recovered from the TEM outputs if
\begin{enumerate}
    \item $K \geq 2L$ when $\{t_\ell\}_{\ell=1}^L$ are off-grid
    \item $K \geq L$ when $\{t_\ell\}_{\ell=1}^L$ are on-grid.
\end{enumerate}
\end{theorem}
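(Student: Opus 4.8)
The plan is to assemble Theorem~\ref{theorem:FRI} from three ingredients: a guarantee that the IF-TEM produces at least $2K+2$ firings per period, the exact recovery of the Fourier-series coefficients (FSCs) of the filtered signal via Theorem~\ref{theorem:Binv}, and a classical spectral-estimation (off-grid) or sparse-recovery (on-grid) argument that converts $2L$ FSCs into the FRI parameters $\{a_\ell,\tau_\ell\}_{\ell=1}^L$.

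\textbf{Step 1: counting the firings.} First I would verify that the hypotheses force $N\geq 2K+2$ time instants inside any window of length $T$. Since $L$ is finite, the $\{a_\ell\}$ are finite, and $\max_t|(h*g)(t)|<\infty$, the same estimate as in \eqref{eq:c} shows that $c=\max_t|y(t)|<\infty$; choosing $b>c$ makes the IF-TEM well defined. By \eqref{eq:consecutive_time} consecutive firings satisfy $t_{n+1}-t_n\leq \kappa\delta/(b-c)$, and \eqref{eq:sample_bound} gives $\kappa\delta/(b-c)\leq T/(2K+2)$. Hence any $2K+1$ consecutive inter-firing gaps span less than $T$, so a window of length $T$ contains at least $2K+2$ firing instants; restricting to exactly $N=2K+2$ of them if necessary, we may take $0\leq t_1<\cdots<t_N<T$ with $N\geq 2K+2$, exactly the setting of Theorem~\ref{theorem:Binv}.

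\textbf{Step 2: recovering the FSCs.} From the firings and the IF-TEM parameters, form the measurements $y_n=-b(t_{n+1}-t_n)+\kappa\delta=\int_{t_n}^{t_{n+1}}y(s)\,ds$ as in \eqref{eq:trigger0}. With the kernel whose Fourier samples equal one on $\mathcal K=\{-K,\dots,-1,1,\dots,K\}$ and zero elsewhere, the filtered signal is $y(t)=\sum_{k\in\mathcal K}\hat x[k]e^{jk\omega_0 t}$, and expanding the integral exactly as in \eqref{eq:yx_relProof2} produces the linear system $\mathbf y_0=\mathbf B\hat{\mathbf x}_0$ with $\mathbf B$ as in \eqref{eq:Bmat} and $\hat{\mathbf x}_0$ as in \eqref{eq:xhatProof2}. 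Since $N\geq 2K+2$, Theorem~\ref{theorem:Binv} guarantees that $\mathbf B$ has full column rank, so $\hat{\mathbf x}_0=\mathbf B^\dagger\mathbf y_0$ as in \eqref{eq:inverseFCSB} recovers $\hat x[k]$ exactly for every $k\in\mathcal K$; the conjugate-symmetry relation \eqref{eq:complex_conj} is automatically consistent and costs no extra firings.

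\textbf{Step 3: extracting the FRI parameters.} Dividing by the known nonzero numbers $\hat h(k\omega_0)$ yields, for each $k\in\mathcal K$, the value $\hat x[k]/\hat h(k\omega_0)=\frac1T\sum_{\ell=1}^L a_\ell e^{-jk\omega_0\tau_\ell}$, i.e.\ $2K$ samples of a sum of $L$ complex exponentials, with only the index $k=0$ absent. In the off-grid case, if $K\geq 2L$ then $k=1,\dots,2L$ is a run of $2L$ \emph{consecutive} samples, so the annihilating-filter/Prony method \cite{vetterli2002sampling,stoica} recovers the distinct nodes $e^{-j\omega_0\tau_\ell}$ (hence $\tau_\ell$) from the roots of the annihilating polynomial and then $\{a_\ell\}$ from a Vandermonde system, uniquely because the $L$ delays are distinct. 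In the on-grid case, placing the unknown delays on the known grid turns the same $2L$ relations (no longer required to be consecutive) into an $L$-sparse linear system in the grid amplitudes; as in \cite[Sec.~V-B]{bar2014sub} the associated partial-Vandermonde matrix has every $2L$ columns linearly independent, so $2L$ such measurements determine the $L$-sparse vector uniquely and OMP \cite[Ch.~11]{eldar2015sampling} recovers it, whence $K\geq L$ suffices. Combining the three steps, together with \eqref{eq:sample_bound} for the firing-rate requirement, gives the theorem.

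\textbf{Main obstacle.} The genuinely hard part, the full-column-rank property of $\mathbf B$, is already isolated in Theorem~\ref{theorem:Binv} and handled in the Appendix (Case-1). Within the present argument the only delicate points are the bookkeeping ensuring $2K+2$ firings land inside a single period and, in the off-grid case, checking that the recovered coefficient block actually contains $2L$ consecutive indices despite the missing $k=0$ entry; this is precisely what forces $K\geq 2L$ rather than $K\geq L$ there, while the grid assumption removes the consecutiveness requirement and restores $K\geq L$.
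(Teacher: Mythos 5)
Your proposal is correct and follows essentially the same route the paper takes: the firing-rate condition \eqref{eq:sample_bound} guarantees $N\geq 2K+2$ instants per period, Theorem~\ref{theorem:Binv} then yields the FSCs on $\mathcal K\setminus\{0\}$ via $\mathbf B^\dagger$, and the missing $k=0$ coefficient is exactly why the off-grid case needs $2L$ consecutive indices inside $\{1,\dots,K\}$ (hence $K\geq 2L$) while the on-grid sparse-recovery formulation tolerates non-consecutive indices (hence $K\geq L$). The paper presents this as a combination of Theorem~\ref{theorem:Binv}, \eqref{eq:minFR2}, and the annihilating-filter/OMP discussion rather than as a formal proof, and your write-up simply fills in the same steps.
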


An algorithm to perfectly recover the FRI parameters from IF-TEM samples is summarized in  Algorithm \ref{alg:algorithm_grid}.
\begin{algorithm}[h]
\begin{algorithmic}[1]
\caption{ Reconstruction of a $T$-periodic FRI signal.}
\label{alg:algorithm_grid}
\Statex \textbf{Input:} $N\geq 2K+2$ spike times $\{t_n\}_{n=1}^N$ in a period $T$.
\Statex \textbf{Init:} $n:=1$.
\Loop
        \While {$n<N-1$}  
             \State   Compute  
            $y_n = -b(t_{n+1}-t_n)+\kappa\delta$
            \State $n:=n+1$.
        \EndWhile
    \State Compute Fourier coefficients vector $\mathbf{\hat{x}}_0 = \mathbf{B}^\dagger\,\mathbf{y_0}$ in \eqref{eq:inverseFCSB}. 
    \State   Estimate $\{(a_{\ell},\tau_{\ell})_{\ell=1}^L$ using a spectral analysis method, if the delays are on a grid then $K\geq L$, or $K \geq 2L$ otherwise.
\EndLoop 
\Statex \textbf{Output:} $\{(a_{\ell},\tau_{\ell})_{\ell=1}^L$.
\end{algorithmic}
\end{algorithm}

%%%%%%%%%%%%%%%%%%%%%%%%%%%%%%%%%%%%%%%%%%%%%%%%%%%%%%%%%%%%%%%%%%%%%%%%%%%%%%%%
\subsection{Numerical Results}
\label{subsec:mthdcompare}
In many practical systems, the time instants can only be recorded with finite precision, i.e., in practical circumstances, the recorded times are effective time instances $\{t'_n\}$ which differ from the real-time instances $\{t_n\}$, and perfect reconstruction may no longer be possible \cite{gontier2014sampling,alexandru2019reconstructing}. 

We compare the robustness of the two Algorithms \ref{alg:algorithm_th1} and \ref{alg:algorithm_grid} in the presence of perturbation to the measured time instants. 
We demonstrate that the algorithm which uses the recovery method using the sampling kernel presented in Theorem \ref{theorem:FRI} gives better recovery in the presence of noise than Algorithm \ref{alg:algorithm_th1} which uses the sampling kernel presented in Theorem \ref{theorem:FRI0}.

In the above Algorithms, the first step is to estimate the Fourier samples from the TEM measurements  by taking pseudo inverses of matrix $\mathbf{A}$ (cf. \eqref{eq:mesRel}) and $\mathbf{B}$ (cf. \eqref{eq:mesRel2}), respectively. Both the matrices are functions of the measured time instants and the sampling kernel. In Fig.~\ref{fig:condnum}, we compare the condition numbers of the matrices with perturbed firing instants as a function of the number of FRI signals $L$. To that aim, 5000 random sets of monotonic sequences $\{t_n\in  [0,T)\}_{n=1}^N$ were used.
As shown in Fig. \ref{fig:condnum}, the condition number of the matrix $\mathbf{B}$ is substantially smaller than the condition number of the matrix $\mathbf{A}$. 

\begin{figure}[t!]
%\hspace*{-1.2cm}
	\centering
	\includegraphics[width= 3.7 in]{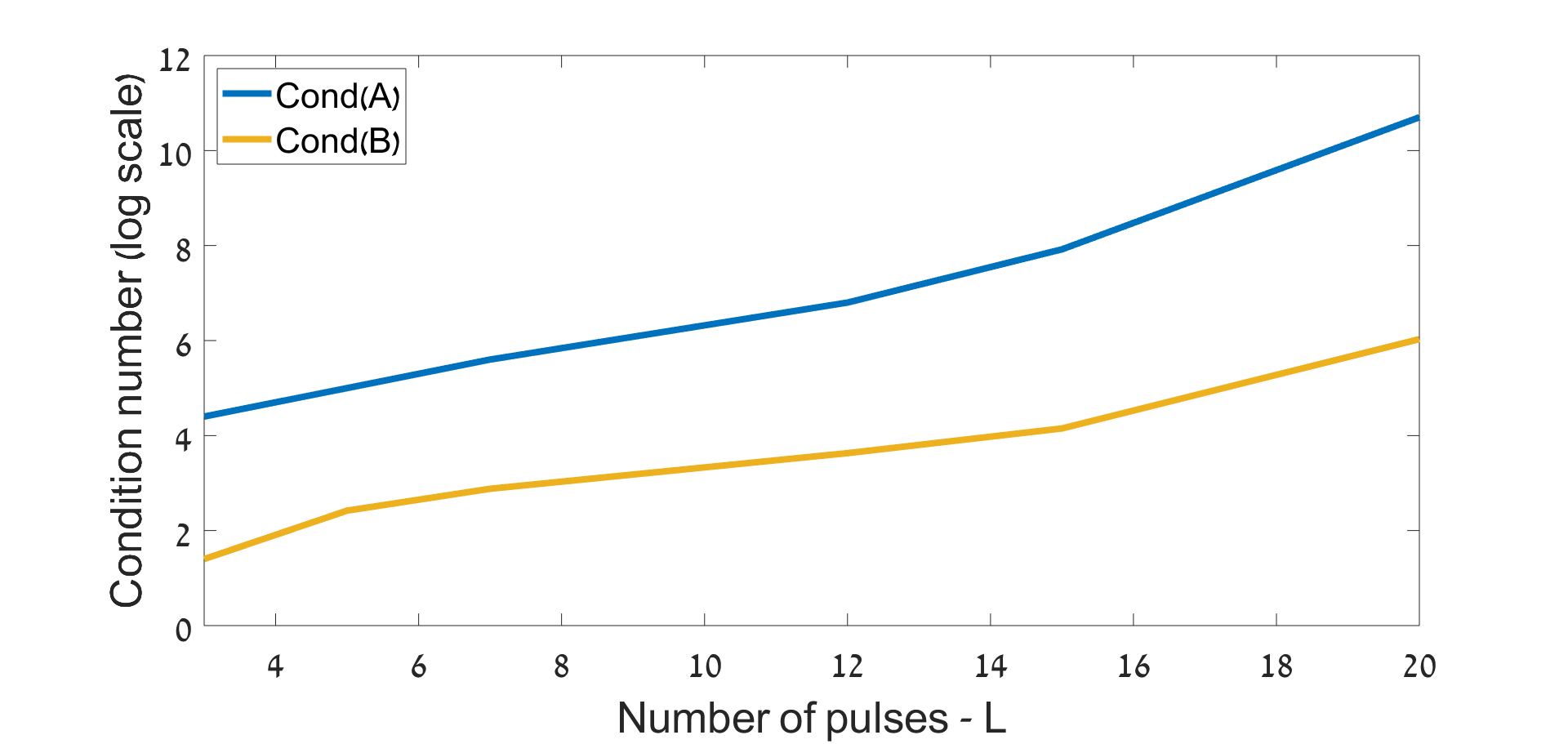}
	\caption{Average condition number of the matrices $\mathbf{A}$ and $\mathbf{B}$ as a function of $L$.}
		\label{fig:condnum}
\end{figure}

Next, we illustrate the reconstruction of a $T$-periodic FRI signal from non-uniform noisy samples (time instants) using the two reconstruction algorithms of Theorems \ref{theorem:FRI0} and \ref{theorem:FRI}.
We created a periodic FRI signal $x(t)$ of the form of \eqref{eq:fri}. The signal $x(t)$ with period $T=1$ consists of $L=3$ pulses with $h(t) = \beta^{(3)}(20t)$, where $\beta^{(3)}(t)$ is a third-order cubic B-spline, with 
$\{a_{\ell}\}_{\ell=1}^L$= \{0.5, -0.45, 0.4\}, and $\{\tau_{\ell}\}_{\ell=1}^3 = \{0.2, 0.4,  0.8\}$. 
% The time delays are  chosen randomly between $(0,1)$ such that they lie on a grid with resolution $0.05$. 
The TEM parameters are $b = 1.2$, $\kappa=1$, and $\delta$ changes from $0.04$ to $0.09$ resulting in $13$ to $24$ time samples. The parameters are chosen to satisfy condition \eqref{eq:sample_bound}.
We consider a sum-of-sincs kernel with $\mathcal{K} =\{-K, \cdots, 0, \cdots, K\}$ for Theorem \ref{theorem:FRI0}, and $\mathcal{K} =\{-K, \cdots, -1, 1, \cdots, K\}$ for Theorem \ref{theorem:FRI}.
For both kernels, the time instances $\{t_n\}$ were perturbed
by a zero-mean white Gaussian noise with mean $0$ and variance $0.001$. 
% The variance are chosen such that the corrupted time instants are still form an increasing monotonic series. 
The FRI parameters are computed by applying OMP \cite{eldar2015sampling}.

Reconstruction accuracy of the two algorithms is compared in terms of relative mean square error (MSE), given by
\begin{equation}
    \text{MSE} = \frac{||{x(t)-\Bar{x}(t)}||_{L_2[0,T]}}{||{x(t)}||_{L_2[0,T]}},
\end{equation}
where $\Bar{x}(t)$ is the reconstructed signal. In Fig.~\ref{fig:MSE}, we show the MSE of the two algorithms as a function of the number of noisy time instances. The MSE of Algorithm~\ref{alg:algorithm_grid} is 2-6 dB lower compared to that of Algorithm~\ref{alg:algorithm_th1} for different firing rates.

	\begin{figure}[hbt!]
	\hspace*{-0.6cm}
		\centering
	\includegraphics[width=0.56\textwidth]{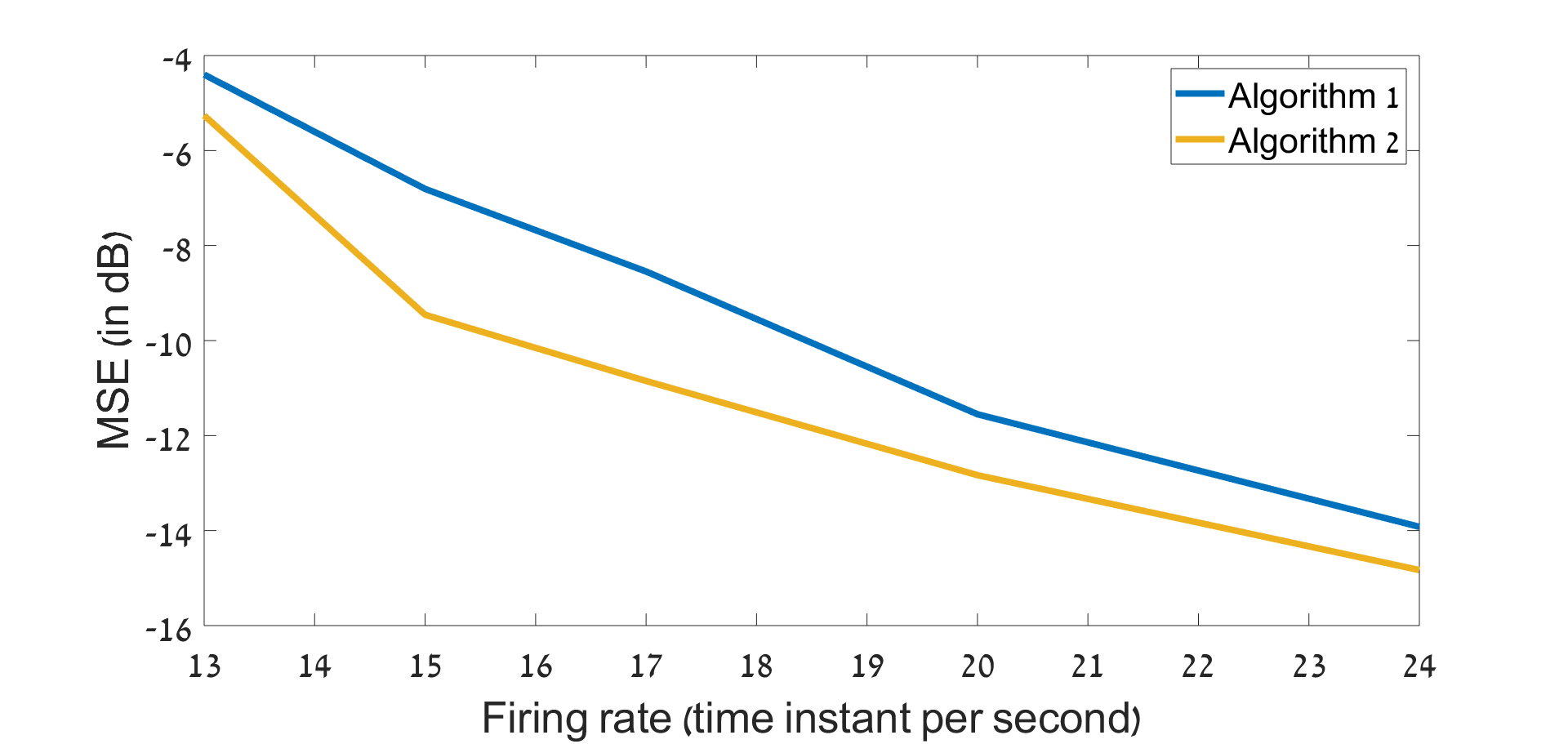}
		\caption{Performance comparison Algorithm~\ref{alg:algorithm_th1} and Algorithm~\ref{alg:algorithm_grid} in the presence
        of continuous-time white Gaussian noise with zero mean and variance 0.001.}
		\label{fig:MSE}
	\end{figure}
%%%%%%%%%%%%%%%%%%%%%%%%%%%%%%%%%%%%%%%%%%%%%%%%%%%%%%%%%%%%%%%%%%%%%%%%%%%%%%%%%%%%%%%%%%%%%%%%%%%%%
\section{Reconstruction of Nonperiodic FRI Signals}
\label{sec:Nonperiodic}

% We now focus on the reconstruction of nonperiodic FRI signals from TEM time instants. 
Consider a nonperiodic FRI signal of the form
\begin{equation}
    \Tilde{x}(t) = \sum_{\ell=1}^L a_{\ell} h(t-\tau_{\ell}),
    \label{eq:nonperiodic}
\end{equation}
where $h(t)$ is a known pulse and the amplitudes and delays $\{(a_{\ell},\tau_{\ell})|\tau_{\ell} \in [0, T),a_{\ell}\in\mathbb{R} \}_{\ell=1}^L$ are unknown parameters. We assume that the pulse $h(t)$ has finite support $R$, namely 
\begin{equation}
    h(t) = 0, \quad \forall|t|\geq \frac{R}{2}.
\end{equation}
Given that our main interest is in pulses with a very wide or even infinite spectrum, i.e very short pulses, traditional sampling techniques will prove to be ineffective in our case \cite{tur2011innovation}.
We design a sampling kernel $\Tilde{g}(t)$ such that
\begin{equation}
    \Tilde{y}(t) = \Tilde{x}(t)*\Tilde{g}(t) = {y}(t),\quad  \forall t\in [0,T),
\end{equation}
where $y(t)$ defined in \eqref{eq:yt_by_x}. Specifically, $\Tilde{g}(t)$ is compactly supported and defined by 
\begin{equation}
    \Tilde{g}(t) = \sum_{s=-S}^S g(t+sT),
\end{equation}
where $S$ is determined by $R$ and $T$ (More details are available in \cite{tur2011innovation}).
Since both time instants and the time delays of the FRI signals are within the interval $[0,T)$, i.e., $t_n\in[0,T)$ and $\tau_\ell \in [0,T)$, $\ell = 1, \cdots, L$, the time instants taken in the nonperiodic case using IF-TEM are the same as in the periodic case. 
Therefore, the recovery guarantees developed for non-periodic case in Theorem \ref{theorem:FRI} and Theorem~\ref{theorem:FRI0} are applicable to non-periodic case.

%%%%%%%%%%%%%%%%%%%%%%%%%%%%%%%%%%%%%%%%%%%%%%%%%%%%%%%%%%%%%%%%%%%%%%%%%%%%%%%%%%%%%%%%%%%%%%%%%%%%%%%%%%%%%%%%%%%%%%%%%%%%%%%%%%%%%%%%%%%%%%%%%%%%%%%%%%%%%%%%%%%%%%%%%%%%%%%%%%%

\section{Conclusions}
\label{sec:Conclusions}
In this paper, we developed two sampling and reconstruction frameworks for periodic FRI signals by using IF-TEMs. We showed that prefect recovery is achieved by both methods. 
While the first method operates close to the rate of innovation, the second requires a higher firing rate. However, in the presence of noise, the second approach is more robust. Our claims are supported by simulation results.
Compared to conventional amplitude-based sampling for FRI signals the proposed TEM-based method is less power consuming and hence, more cost effective.

%%%%%%%%%%%%%%%%%%%%%%%%%%%%%%%%%%%%%%%%%%%%%%%%%%%%%%%%%%%%%%%%%%%%%%%%%%%%%%

\appendix
\section{FRI-TEM Proof}
In this section we present proof for Theorem~\ref{theorem:Ainv} and Theorem~\ref{theorem:Binv}. We consider a unified approach to prove both the theorems by using the fact that the proof of Theorem~\ref{theorem:Binv} is a special case of that of Theorem~\ref{theorem:Ainv} with $\hat{x}[0]=0$. 
\label{Proof1}
\begin{proof}
The matrix $\mathbf{A}$ in \eqref{eq:mesRel} is decomposed as 
\begin{align}
\mathbf{A} = \mathbf{DV},
\label{eq:A=DV}
\end{align}
where 
\begin{align}
\mathbf{D} = 
\begin{pmatrix}
-1 & 1 & 0 &\cdots & 0 \\
0 & -1 & 1 &\cdots & 0\\
0 & 0 &-1 & \cdots & 0 \\
\vdots & \vdots & \vdots & \ddots& \vdots \\
0 &0 &0 & \cdots &1
\end{pmatrix} \in \mathbb{R}^{(N-1) \times N}
\label{eq:dmat}
\end{align}
and $\mathbf{V}$ is given as in \eqref{eq:vmat}. To determine $\mathbf{\hat{x}}$ uniquely from $\mathbf{y} =  \mathbf{DV}\mathbf{\hat{x}}$, the matrices $\mathbf{A}$ should not have a non-zero null space vector. 
\begin{figure*}[!t]
	\begin{align}
	% \hspace*{-1.1cm}
	\mathbf{V}= \begin{pmatrix}
	e^{-j K \omega_0 t_1} & \cdots &e^{-j \omega_0 t_1}& t_1 & e^{j \omega_0 t_1} & \cdots& e^{j K \omega_0 t_1} \\
	e^{-j K \omega_0 t_2} & \cdots &e^{-j \omega_0 t_2}& t_2 & e^{j \omega_0 t_2} &  \cdots & e^{j K \omega_0 t_2} \\
	\vdots & \ddots & \vdots & \vdots & \vdots & \ddots & \vdots \\
	e^{-j K \omega_0 t_N} & \cdots &e^{-j \omega_0 t_N} & t_N & e^{j \omega_0 t_N} & \cdots & e^{j K \omega_0 t_N} \\
	\end{pmatrix} \in \mathbb{C}^{N \times 2K+1}.
	\label{eq:vmat}
	\end{align}
\end{figure*}
\begin{figure*}[!h]
	\begin{align}
	\underbrace{
		\begin{pmatrix}
		e^{-j K \omega_0 t_1} & \cdots &e^{-j \omega_0 t_1}& 1 & e^{j \omega_0 t_1} & \cdots& e^{j K \omega_0 t_1} \\
		e^{-j K \omega_0 t_2} & \cdots &e^{-j \omega_0 t_2}& 1 & e^{j \omega_0 t_2} &  \cdots & e^{j K \omega_0 t_2} \\
		\vdots & \ddots & \vdots & \vdots & \vdots & \ddots & \vdots \\
		e^{-j K \omega_0 t_N} & \cdots &e^{-j \omega_0 t_N} & 1 & e^{j \omega_0 t_N} & \cdots & e^{j K \omega_0 t_N} \\
		\end{pmatrix}}_{\mathbf{W}}
	\underbrace{\begin{pmatrix}
		\bar{x}[-K] \\
		\vdots \\
		\bar{x}[-1]\\
		-c \\
		\bar{x}[1]\\
		\vdots \\
		\bar{x}[K]
		\end{pmatrix}}_{\mathbf{\bar{x}}} = 
	-\bar{x}[0]
	\underbrace{\begin{pmatrix}
		t_1 \\t_2\\ \vdots \\t_N
		\end{pmatrix}}_{\mathbf{t}}.
	\label{eq:Wx=t}
	\end{align}
\end{figure*}

The matrix $\mathbf{D}$ is a difference operator which has null space vector $c \mathbf{1}_N$ where $c \in \mathbb{C}\backslash \{0\}$. Hence, if there exist a non-zero vector $\mathbf{x}$  as in \eqref{eq:xhat} whose components satisfy \eqref{eq:complex_conj}, such that $\mathbf{Vx} = c \mathbf{1}_N$ for some arbitrary $c$, then there does not exist a unique solution.  We show that for $N \geq 2K+2 \geq 2L+1$, uniqueness is guaranteed. Specifically, we would like to show that there does not exist an $\mathbf{x}$ satisfying \eqref{eq:complex_conj}, a set $\{t_n\}_{n=1}^N$, and $c\neq0$ such that $\mathbf{Vx} = c \mathbf{1}_N$ for $N > 2K+1$.

For simplicity of discussion, we define $\bar{x}[k] = \frac{\hat{x}[k]}{j k \omega_0}$ for $k \neq 0$ and $\bar{x}[0] = \hat{x}[0]$. The modulus and angle of the complex-valued coefficient $x[k]$ is denoted by $|x[k]|$ and $\angle x[k]$, respectively.
The equation $\mathbf{Vx} = c \mathbf{1}_N$ is re-written as in \eqref{eq:Wx=t} or alternatively as

\begin{align}
f(t) = r(t), \quad \text{for} \quad t = t_1, \cdots, t_N,
\label{eq:f=r}
\end{align}
where
\begin{align}
f(t) = c - \sum_{k = 1}^K |\bar{x}[k]|\, \cos(k\omega_0 t + \angle \bar{x}[k])  
\label{eq:ft}
\end{align}
is a $K$-th order trigonometric polynomial and $r(t) = \bar{x}[0] t$ is a straight line with slope $ \bar{x}[0]$.
If there exist a $c$ and $\{\bar{x}[k]\}_{k = 0}^K$ such that $f(t)$ and $r(t)$ intersect each other $N$-times within an interval $(0, T]$, that is, there exists a set $\mathcal{T}_N = \{t_n \in [0, T), n =1, \cdots, N \}$ satisfying $f(t)=r(t)$ then uniqueness is not guaranteed.

\begin{figure}
	\centering
	\includegraphics[width= 3 in]{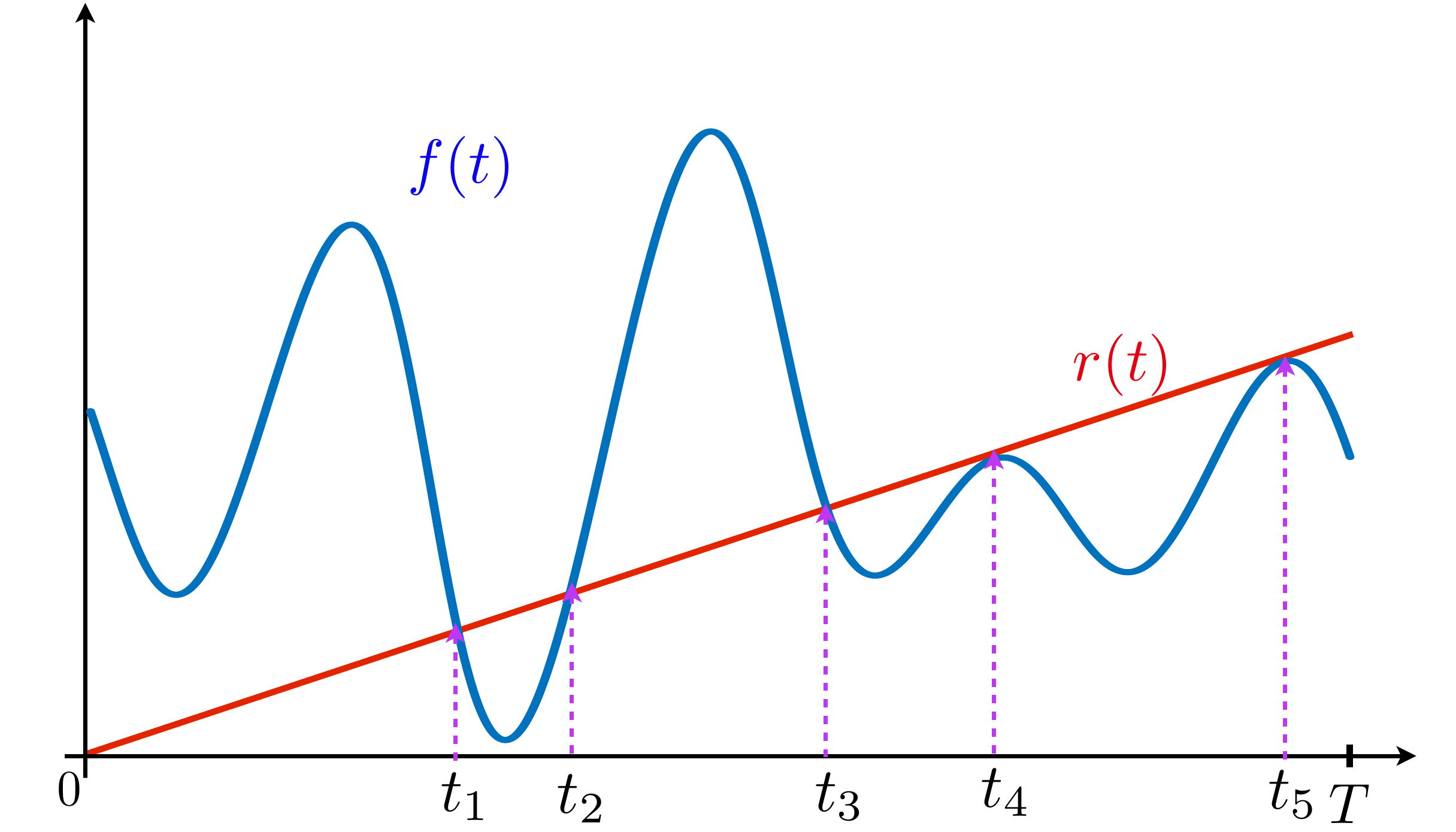}
	\caption{An illustration of intersection of a trigonometric polynomial $f(t)$ and a straight line $r(t)$.}
	\label{fig:intersection}
\end{figure}

Let us consider two mutually exclusive cases: (1) $\bar{x}[0] = \hat{x}[0] = 0$ and (2) $\bar{x}[0] = \hat{x}[0] \neq 0$.

\noindent \emph{Case-1:} 
For $\bar{x}[0] = 0$, the slope of the straight line $r(t)$ is zero and hence, \eqref{eq:f=r} is equivalent to determining zeros of $f(t)$ in the interval $(0, T])$. Since $f(t)$ is a trigonometric polynomial of order $K$ with $\omega_0 = \frac{2\pi}{T}$, it will have a maximum of $2K$ zeros within the interval $(0, T])$  \cite[p. 150]{powell_1981}. Hence, for $N>2K$, there does not exist  a $c\neq 0$ and a feasible $\{\bar{x}[k]\}_{k = 0}^K$  such that \eqref{eq:f=r} holds true. 

\noindent \emph{Case-2:} 
Consider the case when $\bar{x}[0] \neq 0$. We intend to determine the maximum number of intersections of a trigonometric polynomial of order $K$ with a straight line. To this end, let $t_1$ be the first intersection point. Further, let us assume that the slope of $f(t)$ at $t=t_1$ is positive (or negative), that is, $f^{\prime}(t_1) >0$ (or $f^{\prime}(t_1) <0$) where $f^{\prime}(t)$ denotes derivative of $f(t)$. This implies that there may exist a maximum (or minimum) of $f(t)$ for $t \in [0,  t_1)$ and a minimum (or maximum) for $t \in (t_1, T )$. An illustrative example is shown in Fig.~\ref{fig:intersection}. Since $r(t)$ is a monotone function, to have a second intersection $t_2$, it is necessary that $f^{\prime}(t)$ changes its sign. In essence, if there exist a $t_2 \in \mathcal{T}_N$, then there should be a minimum (or maximum) of $f(t)$ in the interval $(t_1, t_2)$. Applying this argument to the remaining intersection points in $\mathcal{T}_{N}$ we infer that for $N$ intersection points there should be atleast $N-1$ extrema. Alternatively, a function with $N-1$ extrema can intersect a monotone function at a maximum of $N$ points. As $f^{\prime}(t)$ too is a $K$-th order trigonometric polynomial, it has a maximum of $2K$ zeros \cite[p. 150]{powell_1981}. This implies that $f(t)$ can have a maximum of $2K$ extrema. Hence, $f(t)$ can intersect $r(t)$ at a maximum of $2K+1$ points within interval $(0, T])$. This implies that for $N>2K+1$, the equation $f(t) = r(t)$ can not have any solution and hence, matrix $\mathbf{A}$ does not have a non-zero null-vector and uniqueness is guaranteed.

\end{proof}

%%%%%%%%%%%%%%%%%%%%%%%%%%%%%%%%%%%%%%%%%%%%%%%%%%%%%%%%%%%%%%%%%%%%%%%%%%%%%%%%

%\newpage
\bibliographystyle{ieeetr}
\bibliography{refs}

\begin{thebibliography}{10}

\bibitem{eldar2015sampling}
Y.~C. Eldar, {\em Sampling theory: Beyond bandlimited systems}.
\newblock Cambridge University Press, 2015.

\bibitem{unser2000sampling}
M.~Unser, ``Sampling-50 years after shannon,'' {\em Proc. IEEE}, vol.~88,
  no.~4, pp.~569--587, 2000.

\bibitem{nyquist1928certain}
H.~Nyquist, ``Certain topics in telegraph transmission theory,'' {\em Trans.
  American Inst. of Elect. Eng.}, vol.~47, no.~2, pp.~617--644, 1928.

\bibitem{eldar2009compressed}
Y.~C. Eldar, ``Compressed sensing of analog signals in shift-invariant
  spaces,'' {\em IEEE Trans. Signal Process.}, vol.~57, no.~8, pp.~2986--2997,
  2009.

\bibitem{christensen2004oblique}
O.~Christensen and Y.~C. Eldar, ``Oblique dual frames and shift-invariant
  spaces,'' {\em Applied and Comput. Harmonic Anal.}, vol.~17, no.~1,
  pp.~48--68, 2004.

\bibitem{asynchronous_adc}
M.~Malmirchegini, M.~M. Kafashan, M.~Ghassemian, and F.~Marvasti, ``Non-uniform
  sampling based on an adaptive level-crossing scheme,'' {\em IET Signal
  Process.}, vol.~9, pp.~484--490(6), August 2015.

\bibitem{lazar2004perfect}
A.~A. Lazar and L.~T. T{\'o}th, ``Perfect recovery and sensitivity analysis of
  time encoded bandlimited signals,'' {\em IEEE Trans. Circuits Syst. I: Reg.
  Papers}, vol.~51, no.~10, pp.~2060--2073, 2004.

\bibitem{adam2020sampling}
K.~Adam, A.~Scholefield, and M.~Vetterli, ``Sampling and reconstruction of
  bandlimited signals with multi-channel time encoding,'' {\em IEEE Tran.
  Signal Process.}, vol.~68, pp.~1105--1119, 2020.

\bibitem{rastogi2009low}
M.~Rastogi, V.~Garg, and J.~G. Harris, ``Low power integrate and fire circuit
  for data conversion,'' in {\em proc. IEEE Int. Symp. Circuits and Syst.},
  pp.~2669--2672, IEEE, 2009.

\bibitem{alexandru2019reconstructing}
R.~Alexandru and P.~L. Dragotti, ``Reconstructing classes of non-bandlimited
  signals from time encoded information,'' {\em IEEE Trans. Signal Process.},
  vol.~68, pp.~747--763, 2019.

\bibitem{hilton2021time}
M.~Hilton, R.~Alexandru, and P.~L. Dragotti, ``Time encoding using the
  hyperbolic secant kernel,'' in {\em European Signal Process. Conf.
  (EUSIPCO)}, pp.~2304--2308, IEEE, 2021.

\bibitem{rudresh2020time}
S.~Rudresh, A.~J. Kamath, and C.~S. Seelamantula, ``A time-based sampling
  framework for finite-rate-of-innovation signals,'' in {\em Proc. IEEE Int.
  Conf. Acoust., Speech and Signal Process. (ICASSP)}, pp.~5585--5589, 2020.

\bibitem{adam2020encoding}
K.~Adam, A.~Scholefield, and M.~Vetterli, ``Encoding and decoding mixed
  bandlimited signals using spiking integrate-and-fire neurons,'' in {\em Proc.
  IEEE Int. Conf. Acoust., Speech and Signal Process. (ICASSP)},
  pp.~9264--9268, IEEE, 2020.

\bibitem{martinez2019delta}
P.~Mart{\'\i}nez-Nuevo, H.-Y. Lai, and A.~V. Oppenheim, ``Delta-ramp encoder
  for amplitude sampling and its interpretation as time encoding,'' {\em IEEE
  Trans. Signal Process.}, vol.~67, no.~10, pp.~2516--2527, 2019.

\bibitem{neuromorphic_computer}
B.~{Rajendran}, A.~{Sebastian}, M.~{Schmuker}, N.~{Srinivasa}, and
  E.~{Eleftheriou}, ``Low-power neuromorphic hardware for signal processing
  applications: A review of architectural and system-level design approaches,''
  {\em IEEE Signal Process. Mag.}, vol.~36, no.~6, pp.~97--110, 2019.

\bibitem{camera1}
F.~{Barranco}, C.~{Fermüller}, and Y.~{Aloimonos}, ``Contour motion estimation
  for asynchronous event-driven cameras,'' {\em Proc. IEEE}, vol.~102, no.~10,
  pp.~1537--1556, 2014.

\bibitem{camera2}
F.~{Barranco}, C.~{Fermüller}, and Y.~{Aloimonos}, ``Contour motion estimation
  for asynchronous event-driven cameras,'' {\em Proc. IEEE}, vol.~102, no.~10,
  pp.~1537--1556, 2014.

\bibitem{lazar2004time}
A.~A. Lazar, ``Time encoding with an integrate-and-fire neuron with a
  refractory period,'' {\em Neurocomputing}, vol.~58, pp.~53--58, 2004.

\bibitem{lazar2003time}
A.~A. Lazar and L.~T. T{\'o}th, ``Time encoding and perfect recovery of
  bandlimited signals,'' in {\em Proc. IEEE Int. Conf. Acoust., Speech and
  Signal Process. (ICASSP)}, vol.~6.

\bibitem{feichtinger2012approximate}
H.~G. Feichtinger, J.~C. Pr{\'\i}ncipe, J.~L. Romero, A.~S. Alvarado, and G.~A.
  Velasco, ``Approximate reconstruction of bandlimited functions for the
  integrate and fire sampler,'' {\em Advances in Computational Math.}, vol.~36,
  no.~1, pp.~67--78, 2012.

\bibitem{adam2019multi}
K.~Adam, A.~Scholefield, and M.~Vetterli, ``Multi-channel time encoding for
  improved reconstruction of bandlimited signals,'' in {\em Proc. IEEE Int.
  Conf. Acoust., Speech and Signal Process. (ICASSP)}, pp.~7963--7967, IEEE,
  2019.

\bibitem{vetterli2002sampling}
M.~Vetterli, P.~Marziliano, and T.~Blu, ``Sampling signals with finite rate of
  innovation,'' {\em IEEE Trans. Signal Process.}, vol.~50, no.~6,
  pp.~1417--1428, 2002.

\bibitem{tur2011innovation}
R.~Tur, Y.~C. Eldar, and Z.~Friedman, ``Innovation rate sampling of pulse
  streams with application to ultrasound imaging,'' {\em IEEE Trans. Signal
  Process.}, vol.~59, no.~4, pp.~1827--1842, 2011.

\bibitem{urigiien2012sampling}
J.~A. Urigiien, Y.~C. Eldar, P.~Dragotti, and Z.~Ben-Haim, ``Sampling at the
  rate of innovation: Theory and applications,'' {\em Compressed Sensing:
  Theory and Applications}, vol.~148, 2012.

\bibitem{bar2014sub}
O.~Bar-Ilan and Y.~C. Eldar, ``Sub-{N}yquist radar via {D}oppler focusing,''
  {\em IEEE Trans. Signal Process.}, vol.~62, no.~7, pp.~1796--1811, 2014.

\bibitem{bajwa2011identification}
W.~U. Bajwa, K.~Gedalyahu, and Y.~C. Eldar, ``Identification of parametric
  underspread linear systems and super-resolution radar,'' {\em IEEE Trans.
  Signal Process.}, vol.~59, no.~6, pp.~2548--2561, 2011.

\bibitem{rudresh2017finite}
S.~Rudresh and C.~S. Seelamantula, ``Finite-rate-of-innovation-sampling-based
  super-resolution radar imaging,'' {\em IEEE Trans. Signal Process.}, vol.~65,
  no.~19, pp.~5021--5033, 2017.

\bibitem{mulleti2014ultrasound}
S.~Mulleti, S.~Nagesh, R.~Langoju, A.~Patil, and C.~S. Seelamantula,
  ``Ultrasound image reconstruction using the finite-rate-of-innovation
  principle,'' in {\em IEEE Int. Conf. Image Process. (ICIP)}, pp.~1728--1732,
  IEEE, 2014.

\bibitem{mulleti2016fri}
S.~Mulleti, B.~A. Shenoy, and C.~S. Seelamantula, ``{FRI} sampling on
  structured nonuniform grids—application to super-resolved optical
  imaging,'' {\em IEEE Trans. Signal Process.}, vol.~64, no.~15,
  pp.~3841--3853, 2016.

\bibitem{donoho2006compressed}
D.~L. Donoho, ``Compressed sensing,'' {\em IEEE Trans. Info Theory}, vol.~52,
  no.~4, pp.~1289--1306, 2006.

\bibitem{stoica}
P.~Stoica and R.~L. Moses, {\em Introduction to Spectral Analysis}.
\newblock Upper Saddle River, NJ: Prentice Hall, 1997.

\bibitem{gontier2014sampling}
D.~Gontier and M.~Vetterli, ``Sampling based on timing: Time encoding machines
  on shift-invariant subspaces,'' {\em Applied and Comput. Harmonic Anal.},
  vol.~36, no.~1, pp.~63--78, 2014.

\bibitem{powell_1981}
M.~J.~D. Powell, {\em Approximation Theory and Methods}.
\newblock Cambridge University Press, 1981.

\end{thebibliography}
%\clearpage

\end{document}